\newcommand{\R}{\mathbb{R}}
\newcommand{\C}{\mathbb{C}}
\newcommand{\Z}{\mathbb{Z}}
\newcommand{\hM}{\hat{M}}
\newcommand{\x}{\hat{x}}
\newcommand{\z}{\hat{z}}
\newcommand{\p}{\hat{p}}
\newcommand{\q}{\hat{q}}
\newcommand{\E}{\mathbb{E}}
\newcommand{\N}{\mathcal{N}}
\newtheorem{thm}{Theorem}
\newtheorem*{thm*}{Theorem}
\numberwithin{equation}{section}
\numberwithin{thm}{section}
\newtheorem{cor}[thm]{Corollary}
\newtheorem{prop}[thm]{Proposition}
\newtheorem{lemma}[thm]{Lemma}
\newtheorem{remark}[thm]{Remark}
\begin{document}
\title{Dihedral multi-reference alignment}
\author{Tamir Bendory, Dan Edidin, William Leeb, and Nir Sharon}

\maketitle

\begin{abstract}
  We study the dihedral multi-reference alignment problem of estimating the orbit of a signal from multiple noisy observations of the signal, {acted on} by  random elements of the dihedral group. We show that  if the group elements are drawn from a generic distribution,  the orbit of a  generic signal is uniquely determined from the second moment of the observations. This implies that the optimal estimation rate in the high noise regime is proportional to the square of the variance of the noise.   
This is the first result of this type for multi-reference alignment over a non-abelian group with a non-uniform distribution of group elements. 
Based on tools from invariant theory and algebraic geometry, 
we also delineate conditions for unique orbit recovery for multi-reference alignment models over  finite groups (namely, when the dihedral group is replaced by a general finite group) when the group elements are drawn from a generic distribution.
Finally, we design and study numerically three computational frameworks for estimating the signal based on group synchronization, expectation-maximization, and the method of moments.

\end{abstract}


\section{Introduction}
We study the dihedral multi-reference alignment (MRA) model
\begin{equation} \label{eq:mra}
	y = g\cdot x + \varepsilon, \quad g\sim\rho, \quad \varepsilon\sim\N(0,\sigma^2I),
\end{equation}
where 
\begin{itemize}
	\item   $x\in\R^L$ is a fixed (deterministic) signal to be estimated; 
	\item $\rho$ is an unknown distribution  defined over the simplex $\Delta_{2L}$; 
	\item $g$ is a random element of the dihedral group $D_{2L}$, drawn i.i.d.\ from $\rho$, and acting on the signal by circular translation and reflection (see Figure~\ref{fig:example});   
	\item 	$\varepsilon$ is a normal isotropic i.i.d.\ noise  with zero mean and variance $\sigma^2$.
\end{itemize}
We wish to estimate the signal $x$ from  $n$ realizations (observations) of $y$,
\begin{equation} \label{eq:mra_observations}
	y_i =g_i\cdot x + \varepsilon_i, \quad i=1,\ldots,n,
\end{equation}
 while the corresponding group elements $g_1,\ldots,g_n$ are unknown. 
We note, however, that the signal can be identified only up to the action of an arbitrary element of the dihedral group. Therefore, unless a prior information on the signal is available, the goal  is estimating the orbit of signals  $\{g\cdot x | g\in D_{2L}\}$. This type of problem is often dubbed an orbit recovery problem.

The model~\eqref{eq:mra} is an instance of the more general MRA problem that was studied thoroughly in recent years~\cite{bandeira2014multireference,bendory2017bispectrum,bandeira2017estimation,abbe2017sample,abbe2018multireference,boumal2018heterogeneous,abbe2018estimation,perry2019sample,ma2019heterogeneous,bandeira2020optimal,bandeira2020non,romanov2021multi,abas2021generalized,hirn2019wavelet,aizenbud2021rank,katsevich2020likelihood,fan2020likelihood,ghosh2021multi,gao2019iterative,brunel2019learning,bendory2021compactification,bendory2021sparse}.
In its generalized version, the MRA model is formulated as~\eqref{eq:mra}, but  the signal $x$ may lie in an arbitrary vector space (not necessarily $\R^L$),  the dihedral group $D_{2L}$ is replaced by an arbitrary  {group $G$}, and $g\sim \rho$ is a distribution over $G$ (in some cases, an additional fixed linear operator acting on the signal is also considered, e.g.,~\cite{bandeira2020non, bandeira2017estimation, bendory2020super,bendory2020single}). The goal is to estimate the orbit of  $x$, under the action of the  group $G$. 

Most of the previous studies on MRA have considered the uniform (or Haar) distribution $\rho$ over the group elements. In particular, it was shown that in many cases, such as $x\in\R^L$ and a uniform distribution over $\Z_L$, the third moment suffices to recover a generic signal uniquely, {and consequently $n/\sigma^{6}\to\infty$ is a necessary condition for accurate estimation of generic signal ~\cite{bandeira2017estimation, kakarala2009completeness, perry2019sample, bendory2017bispectrum}.
In fact, this follows from a general result that in the low SNR regime $\sigma\to\infty$ (with a fixed dimension~$L$), a necessary condition for signal identification  is $n/\sigma^{2d}\to\infty$, where $d$ is the lowest order moment that identifies the orbit of signals uniquely~\cite{bandeira2017estimation, abbe2018estimation, bandeira2020optimal, perry2019sample} (see~\cite{romanov2021multi} for sample complexity analysis in high dimensions).}


The effect of non-uniform distribution on the sample complexity was first studied in~\cite{abbe2018multireference} for the  abelian group $\Z_L$ and $x\in\R^L$. It was shown that in this case the second moment suffices to identify the orbit of {generic signals} uniquely for almost any non-uniform distribution (rather than the third moment if the distribution is uniform). 
In this work, we extend~\cite{abbe2018multireference} for the non-abelian group $D_{2L}$ and show that for a generic distribution and signal, the second moment identifies the orbit of solutions. This implies that a necessary condition for accurate orbit recovery 
 under the model~\eqref{eq:mra} for $\sigma\to\infty$ and fixed $L$ is $n/\sigma^4\to\infty$. 
 This is the first result of this type for multi-reference alignment over a non-abelian group with a non-uniform distribution of group elements. {The fact that the group $D_{2L}$ is non-abelian makes the {analysis of the} orbit recovery problem significantly more difficult. The reason is that {this action of the dihedral group on $\R^N$} 
   cannot be diagonalized as we explain in Remark \ref{rem:diagonalizable}.
   It follows that there is no basis where the entries of the moment
   tensors are monomials. By contrast,  a previous work for the cyclic group
   $\Z_L$ took advantage of the fact that the entries of the moment tensors are monomials when expressed in the Fourier basis~\cite{abbe2018multireference}.}
The main theoretical results are summarized as follows.

\begin{figure}
	\begin{subfigure}[ht]{0.3\columnwidth}
		\centering
		\includegraphics[width=\columnwidth]{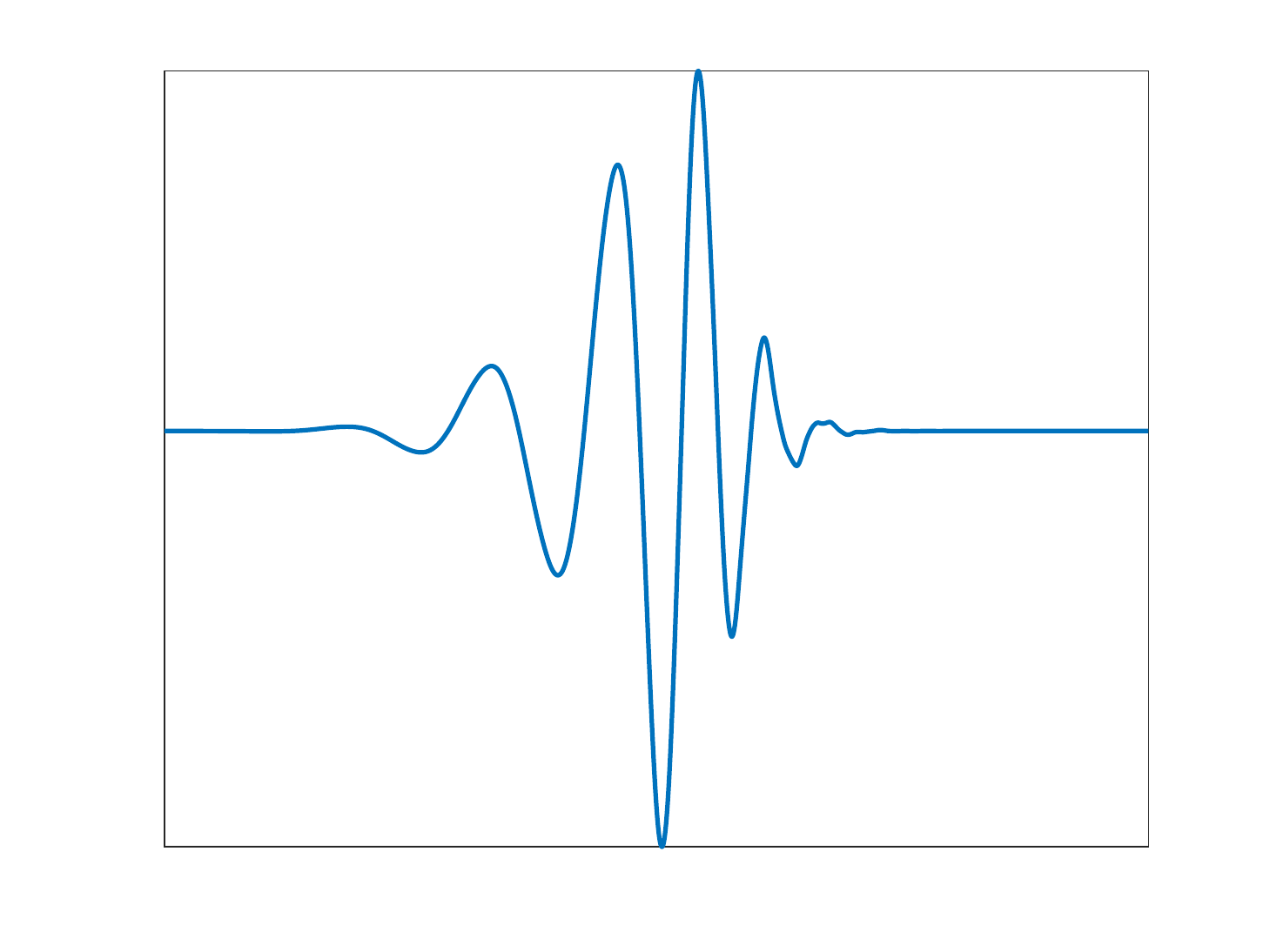}
		\caption{Signal}
	\end{subfigure}
	\hfill
	\begin{subfigure}[ht]{0.3\columnwidth}
	\centering
	\includegraphics[width=\columnwidth]{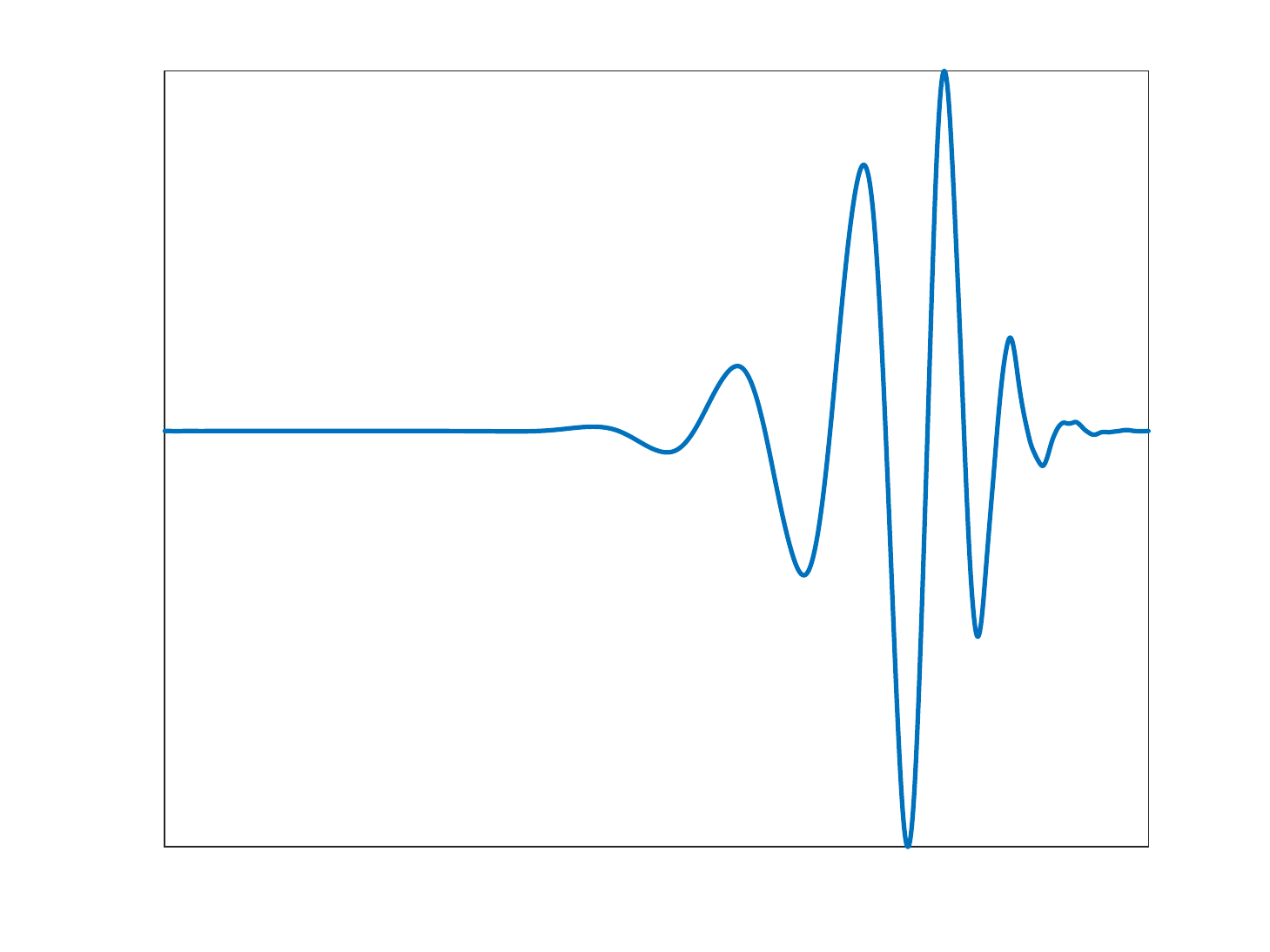}
	\caption{Shifted signal}
\end{subfigure}
	\hfill
\begin{subfigure}[ht]{0.3\columnwidth}
	\centering
	\includegraphics[width=\columnwidth]{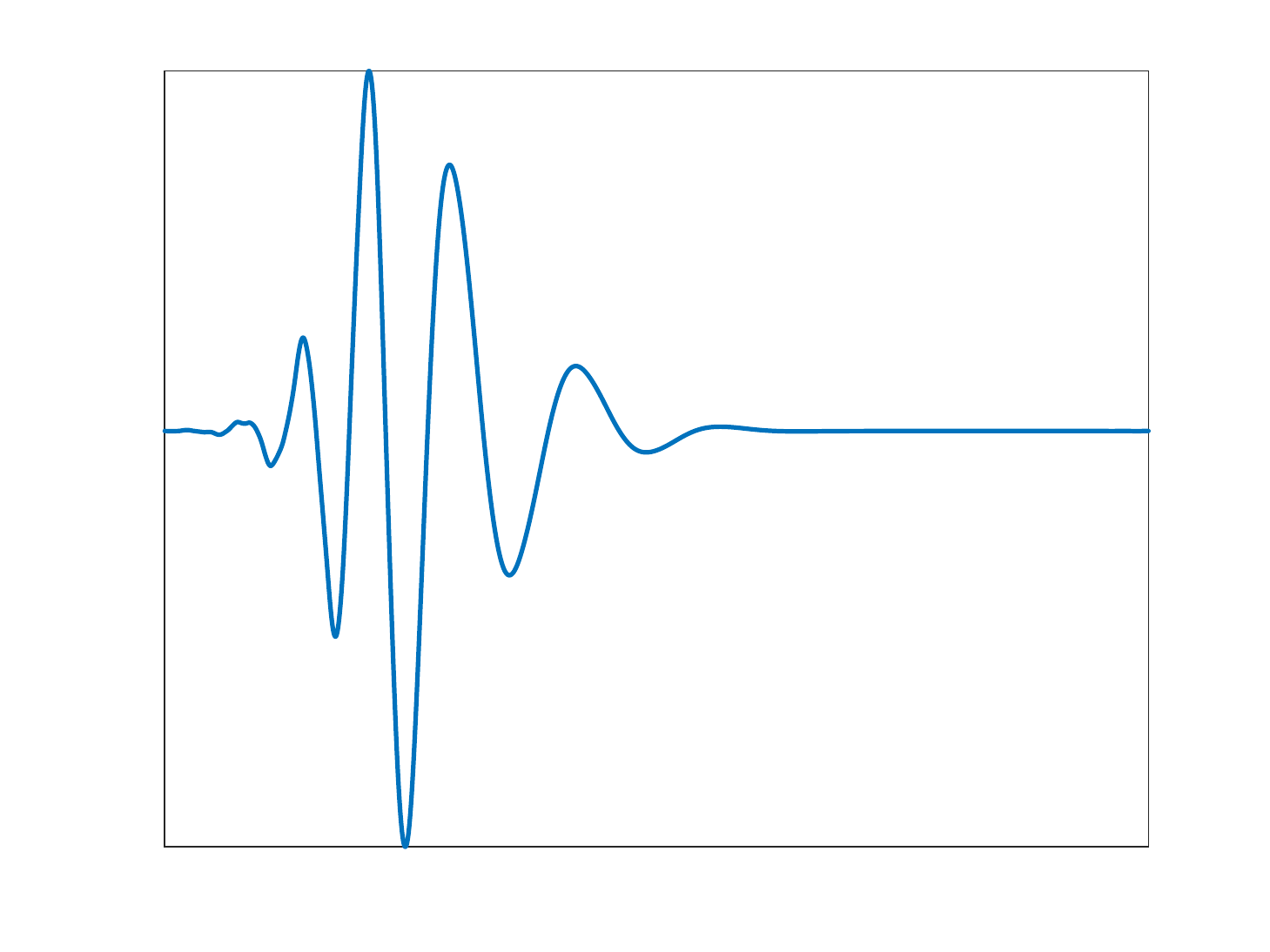}
	\caption{Shifted and reflected signal}
\end{subfigure}
	\caption{\label{fig:example} An example of the action of the dihedral group. The MRA problem~\eqref{eq:mra} entails estimating a signal, up to a global circular shift and reflection, from multiple noisy {copies of the signal}  acted upon by random elements of the dihedral group.}
\end{figure}

\begin{thm}[informal statement of the main theorem] \label{thm:main_informal}
  Consider the dihedral MRA problem~\eqref{eq:mra} with a generic probability distribution $\rho$. Then, the first and second order moments of $y$ are sufficient
  to {uniquely identify} almost all orbits.
\end{thm}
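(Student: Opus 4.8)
The plan is to work with the first and second moments of $y$ as polynomial functions of the unknowns $(x,\rho)$, and to show that the map sending an orbit to these moments is generically finite-to-one — in fact, that the fiber over a generic value is a single orbit. We parametrize: the signal $x\in\R^L$ (with $N=2L$ the order of the dihedral group) and the distribution $\rho$ by its $2L$ weights on the simplex $\Delta_{2L}$. The first moment is $\mu_1=\E[g\cdot x]=\big(\sum_{g}\rho(g)\,g\big)x=:M_\rho x$, and the second moment is $\mu_2=\E[(g\cdot x)(g\cdot x)^\top]=\sum_g \rho(g)\,(g\cdot x)(g\cdot x)^\top$, from which the noisy-observation covariance differs only by the known $\sigma^2 I$; so it suffices to treat the noiseless moments. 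The key structural input is that $D_{2L}$ acting on $\R^N$ decomposes into irreducibles (a few one-dimensional and roughly $L/2$ two-dimensional pieces), and by Schur's lemma the second moment, viewed blockwise, is constrained: on each two-dimensional isotypic block it is a $2\times2$ symmetric matrix whose entries are the natural $\rho$-averaged quadratic invariants of the corresponding Fourier components of $x$. Unlike the cyclic case, these entries are not monomials (Remark~\ref{rem:diagonalizable}), so one cannot simply read off magnitudes and phases; instead one gets, per frequency, a small system of quadratic equations relating the real and imaginary parts of $\hat x$ at that frequency to three bilinear combinations of the Fourier coefficients of $\rho$.

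The core of the argument is a dimension count combined with a genericity (Zariski-density) argument. First I would count parameters: an orbit of $x$ has dimension $L$ (the translations contribute $\approx L$, reflection is discrete), plus the $2L-1$ free parameters of $\rho$, so the domain has dimension about $3L-1$. The target (first moment $L$ entries, second moment restricted by Schur to $O(L)$ independent entries — one $1$-dim block contributes a scalar, each $2$-dim block a symmetric $2\times2$, i.e.\ $3$ numbers) has dimension comfortably exceeding $3L-1$ for $L$ large; so a dimension count does not obstruct injectivity, and one expects the moment map to be generically finite. To upgrade "finite" to "degree one onto its image, i.e.\ the fiber is a single orbit," I would argue frequency by frequency: given the $\rho$-Fourier data, show that the three real quadratic equations at each frequency $k$ determine $\hat x(k)$ up to the ambiguity generated by the dihedral action (a unit-modulus phase times possible conjugation), and that this ambiguity is \emph{consistent across frequencies} — i.e., forced to be a single global group element — once we also use the cross-information tying different frequencies together through the shared distribution $\rho$ and the first moment. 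The standard technique here is: exhibit one explicit pair $(x_0,\rho_0)$ for which the fiber is verified to be a single orbit (this reduces to a concrete linear-algebra/resultant check), conclude the Jacobian of the moment map has full rank there, hence the set of "bad" $(x,\rho)$ — those with a larger fiber or with rank drop — is contained in a proper Zariski-closed subset, so its complement is a generic (dense, full-measure) set.

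The main obstacle I anticipate is precisely the step that has no analogue in the abelian case: showing that the per-frequency phase/conjugation ambiguities are \emph{globally coherent}. In the cyclic $\Z_L$ analysis of~\cite{abbe2018multireference} the second moment in the Fourier basis is diagonal with monomial entries $|\hat x(k)|^2\cdot(\text{power spectrum of }\rho)$, so it only fixes magnitudes, and a \emph{separate} piece of information is needed for phases; for $D_{2L}$ the off-diagonal entry of each $2\times2$ Schur block mixes $\mathrm{Re}\,\hat x(k)$ and $\mathrm{Im}\,\hat x(k)$ and thus carries phase information, but it carries it \emph{modulo the reflection}, and one must rule out independent reflections at different frequencies. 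I would handle this by using the first moment (which transforms linearly and pins down a genuine representative, not just invariants) together with the fact that a generic $\rho$ is not supported on a subgroup, so that the $2\times2$ blocks are genuinely non-diagonalizable and their off-diagonal entries are non-degenerate — this should force the ambiguity group to be exactly $D_{2L}$ rather than a larger product of per-frequency copies. Once that coherence is established, assembling the frequency-wise recoveries yields the orbit, and the genericity wrapper (a single worked example plus the Zariski-closed "bad set" argument, invoking the invariant-theory/algebraic-geometry framework already set up in the paper) completes the proof.
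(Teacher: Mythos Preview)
Your proposal rests on a structural picture that does not hold. You write that the second moment ``on each two-dimensional isotypic block is a $2\times2$ symmetric matrix'' and propose to argue ``frequency by frequency.'' But Schur's lemma constrains $M^2$ to be block-diagonal in the isotypic decomposition only when $M^2$ commutes with the $G$-action, i.e., only for the \emph{uniform} distribution. For a generic $\rho$, the second moment in the Fourier basis has entries (see Lemma~\ref{lem:moments} and \eqref{eq.Mij}) $M_{i,j}=\hat p[i+j]\,\hat x[i]\hat x[j]+\hat q[L-i-j]\,\hat x[L-i]\hat x[L-j]$, coupling \emph{all} frequency pairs $(i,j)$. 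There is no per-frequency decoupling, and it is precisely these cross-frequency entries that carry the phase information needed here (as already in the cyclic case of \cite{abbe2018multireference}). Consequently your ``per-frequency three-equation system, then glue the ambiguities'' plan has no starting point: the equations are not organized that way, and you also do not know the $\rho$-Fourier data you propose to condition on.

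The paper's proof exploits exactly this cross-frequency structure. It selects entries $M_{i,j}$ with $i+j$ constant (so they share the unknowns $\hat p[i+j],\hat q[L-i-j]$), eliminates those distribution unknowns by linear consistency, and thereby derives a recursion expressing $\hat z[n+1]$ as an explicit rational function of $\hat z[1],\hat z[2],\hat z[n-1],\hat z[n]$. Iterating down, everything is determined by $\hat z[1]$ together with a two-fold choice of $\hat z[2]$; the weighted homogeneity of the resulting equations plus the reality constraint $\hat z[L-k]=\overline{\hat z[k]}$ then forces $\hat z[1]$ to be an $L$-th root of unity times $\hat x[1]$, yielding at most $2L$ solutions, hence exactly the orbit. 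Your fallback idea---verify at a single $(x_0,\rho_0)$ that the fiber is one orbit, then invoke constancy of degree on a Zariski-open set---is sound in principle, but note that checking ``fiber equals one orbit'' is a \emph{global} statement (no extraneous solutions anywhere) that a Jacobian rank computation cannot certify; carrying out that ``concrete linear-algebra/resultant check'' is the entire content of the theorem and would amount to reproducing the elimination argument just described.
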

\begin{cor}[sample complexity] \label{cor:sample_complexity}
	 Consider the dihedral MRA problem~\eqref{eq:mra} in the low SNR regime $\sigma\to\infty$.
For a generic probability distribution  and a generic signal, 
$n/\sigma^4\to\infty$ is a necessary condition for accurate orbit {identification}.  
\end{cor}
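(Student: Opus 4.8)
The strategy is to combine Theorem~\ref{thm:main_informal} with the general minimax principle quoted in the introduction: if $d$ is the smallest order for which the moments of $y$ of order at most $d$ uniquely identify the orbit of a generic signal, then $n/\sigma^{2d}\to\infty$ is necessary for accurate orbit recovery as $\sigma\to\infty$ with $L$ fixed~\cite{bandeira2017estimation,abbe2018estimation,bandeira2020optimal,perry2019sample}. Theorem~\ref{thm:main_informal} gives $d\le 2$, so it suffices to establish (i) $d\ge 2$, i.e.\ the first moment alone does not determine the orbit of a generic signal, and (ii) the principle, applied with $d=2$. I would present (i) in detail and either invoke (ii) as the cited general fact or spell out the short information-theoretic argument below.

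For (i): the first moment of $y$ is $M_\rho x$, where $M_\rho=\sum_{g\in D_{2L}}\rho(g)\,g$ is a single, $\rho$-dependent matrix, so the first moment carries at most the $L$ numbers forming the vector $M_\rho x$, while the orbit space has dimension $L$ and $\rho$ contributes $2L-1$ further unknowns. Concretely, for a generic signal $x$ the $2L$ points $\{g\cdot x:g\in D_{2L}\}$ affinely span $\R^L$, and for a generic $\rho$ all the weights $\rho(g)$ are positive; hence $M_\rho x$ lies in the topological interior of the polytope $\operatorname{conv}\{g\cdot x\}$. Since $D_{2L}$ acts by orthogonal matrices, the orbit of $x$ is a finite subset of the sphere $\{\|\cdot\|=\|x\|\}$, so an arbitrarily small perturbation $x'$ of $x$ \emph{on that sphere} lies in a different $D_{2L}$-orbit while still having $M_\rho x$ in the interior of $\operatorname{conv}\{g\cdot x'\}$; consequently there is a distribution $\rho'$ with $M_{\rho'}x'=M_\rho x$. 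Thus $(x,\rho)$ and $(x',\rho')$ have the same first moment (and the same energy $\|x\|=\|x'\|$, hence the same $\operatorname{tr}(m_2)$) but distinct orbits, proving $d\ge 2$ and therefore $d=2$.

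For (ii): fix such a pair, chosen with $\|x-x'\|=\delta$ small, and let $f(y)=\sum_g\rho(g)\,\phi_\sigma(y-g x)$ and $f'(y)=\sum_g\rho'(g)\,\phi_\sigma(y-g x')$ be the single-observation densities, $\phi_\sigma$ the $\N(0,\sigma^2 I)$ density. Using $\|g x\|=\|x\|$ and expanding $f/\phi_\sigma$ and $f'/\phi_\sigma$ in powers of $1/\sigma$, the $O(\sigma^{-2})$ terms are governed by the first moment and the energy and hence, by our choice of the pair, cancel; since the law of $y$ determines the orbit (deconvolve the known Gaussian: its law is the law of $g\cdot x$, whose support is a nonempty subset of the orbit of $x$), we have $f\not\equiv f'$, so the leading surviving difference is $O(\sigma^{-4})$ and a Gaussian-integral computation yields $\chi^2(f\,\|\,f')\le C\,\delta^2/\sigma^4$. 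Tensorizing, $\chi^2\bigl(f^{\otimes n}\,\bigm\|\,f'^{\otimes n}\bigr)\le \exp(nC\delta^2/\sigma^4)-1$. Suppose, for contradiction, that accurate orbit identification were possible along a sequence with $n/\sigma^4\le B$; choosing $\delta$ a fixed constant small enough that $BC\delta^2$ is small makes this $\chi^2$, and hence the total-variation distance between the two $n$-sample laws, bounded away from $1$. By Le Cam's two-point inequality (or Fano over a packing of such pairs) every estimator then incurs orbit error at least a fixed positive multiple of $\delta$ under one of the two models, contradicting accuracy. Hence $n/\sigma^4\to\infty$ is necessary.

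The main obstacle is the bound $\chi^2(f\,\|\,f')=O(\delta^2/\sigma^4)$: it requires careful control of the Hermite/moment expansion of a Gaussian mixture and crucially uses that \emph{both} the first moment and the energy of the two models are matched, so that the a priori $\Theta(\sigma^{-2})$ contribution drops out. Everything else — the genericity argument in (i) and the reductions in (ii) — is routine, and (ii) is precisely the general principle already established in the literature, so in the final write-up one may simply cite it once $d=2$ has been pinned down.
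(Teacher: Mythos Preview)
The paper does not give a standalone proof of the corollary: it is presented as an immediate consequence of Theorem~\ref{thm:main_informal} together with the general principle, cited in the introduction, that the low-SNR sample complexity is governed by $\sigma^{2d}$ where $d$ is the smallest moment order determining the orbit. Your plan follows the same route but is more careful: you correctly observe that Theorem~\ref{thm:main_informal} only yields $d\le 2$, whereas the \emph{lower} bound $n/\sigma^4\to\infty$ requires ruling out $d=1$, a point the paper leaves implicit. Your sketch of (ii) is precisely the $\chi^2$/Le~Cam argument underlying the cited references, so in a final write-up one may indeed just cite them once $d=2$ is pinned down.

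There is, however, a small gap in your argument for (i). The claim that the $2L$ orbit points $\{g\cdot x:g\in D_{2L}\}$ affinely span $\R^L$ is false: every element of $D_{2L}$ acts as a coordinate permutation, so all $g\cdot x$ share the same coordinate sum and the orbit lies in the hyperplane $H=\{z:\sum_\ell z[\ell]=\sum_\ell x[\ell]\}$. Consequently $\operatorname{conv}\{g\cdot x\}$ has empty interior in $\R^L$, and if you perturb $x$ to $x'$ with $\sum_\ell x'[\ell]\neq\sum_\ell x[\ell]$ then $\operatorname{conv}\{g\cdot x'\}$ sits in a \emph{different} parallel hyperplane and cannot contain $M_\rho x$ at all. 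The fix is straightforward: work relative to $H$. For generic $x$ the orbit affinely spans $H$ (e.g.\ already the cyclic shifts of $e_1$ do), so $M_\rho x$ lies in the \emph{relative} interior of $\operatorname{conv}\{g\cdot x\}$; then perturb $x'$ within the $(L-2)$-dimensional manifold $\{\|z\|=\|x\|\}\cap H$ (so both the energy and the coordinate sum are preserved), which keeps $\operatorname{conv}\{g\cdot x'\}$ in the same hyperplane and close to $\operatorname{conv}\{g\cdot x\}$. With this adjustment your construction of $(x',\rho')$ with matching first moment and distinct orbit goes through, and the rest of the plan is sound.
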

Theorem~\ref{thm:main_informal}  is formulated in technical terms in Theorem~\ref{thm:main}, which is proved in Section~\ref{sec:uniqueness}. The proof is based on algebraic geometry tools {and is not constructive,  namely, it does not provide an explicit algorithm of how to recover the signal from the first and second moment.}
 {Section~\ref{sec:uniqueness} also discusses the precise meaning of the notion of generic signal and distribution.}
In Section~\ref{sec:general_theory}, we use  invariant theory to delineate  general conditions for orbit recovery from the second moment in general MRA models over finite groups.

The MRA model is mainly motivated by the molecular structure reconstruction problem in single-particle cryo-electron microscopy (cryo-EM)~\cite{bendory2020single}.
The aim of a  cryo-EM experiment is constituting a 3-D molecular structure from multiple observations. In each observation, the 3-D structure is acted upon by a random element of the non-abelian group of 3-D rotations SO(3).
In addition, the distribution over SO(3) is usually non-uniform and unknown~\cite{tan2017addressing,naydenova2017measuring,baldwin2020non,sharon2020method}. Therefore, this paper is an important step towards understanding the statistical properties and sample complexity of the cryo-EM problem
 
Section~\ref{sec:algorithms} introduces three statistical estimation frameworks to recover the orbit of $x$.  
The first framework is based on estimating the missing  group elements using
the {method of group synchronization}~\cite{singer2011angular,bandeira2020non}. Once the group elements were accurately estimated, estimating the signal can be obtained by aligning the observations and averaging out the noise. 
However, reliable estimation of  group elements is possible only if the noise level is low enough. To estimate the signal in high noise levels, we also suggest maximizing the marginalized maximum likelihood using expectation-maximization (EM). 
{The EM algorithm
	provides accurate estimations in a wide range of SNR regimes, although we have no theoretical guarantees to support it. Unfortunately, the computational burden of EM} rapidly increases with the number of observations $n$ and the noise level.  As a third method, we propose an estimator based on the method of moments, which works quite well in all SNRs and whose computational burden is roughly constant with the noise level and  moderately increases  with $n$.
 According to Theorem~\ref{thm:main_informal}, we only use  the first and second moments for the estimation. 
 {As with EM, characterizing the properties of the method of moments is left for  future research; see further discussion in Section~\ref{sec:algorithms}.} 

\section{Theory}

\subsection{The dihedral group} \label{sec:dihedral}
The dihedral group $D_{2L}$ is a group of order $2L$, which is usually defined as the group of symmetries of a regular
$L$-gon in $\R^2$. It is generated by a rotation $r$ of order $L$ corresponding to rotation by an angle $2\pi/L$ and
a reflection $s$ of order $2$. Since rotation does not commute with reflection, the group $D_{2L}$ is not abelian, but the relation $rs = sr^{-1}$ holds instead. {Since $r$ has order~$L$, $r^{-1}=r^{L-1}$.} 
The elements of $D_{2L}$ can be enumerated as $$\{1,r,\ldots, r^{L-1}, s, rs,\ldots, r^{L-1}s\},$$ where $1$ is the identity element. Note that the subset
$\{1,r, \ldots, r^{L-1}\}$ is a normal subgroup\footnote{A subgroup $H < G$ is normal
  if it is invariant under conjugation by elements of $G$.} isomorphic to the cyclic group $\Z_L$. MRA over the group $\Z_L$  was studied thoroughly, see for example ~\cite{bandeira2014multireference,bendory2017bispectrum,abbe2018multireference}.

{There are two natural ways to describe the action} of the dihedral group, {$D_{2L}$, on $\R^L$} one in the time (or spatial) domain
and one in the Fourier (frequency) domain.
Explicitly, in the time domain, the action of the dihedral group on a signal $x \in \R^L$ is given by 
\begin{equation}
	\begin{split}
	(r\cdot x)[\ell] &= x[(\ell-1)\bmod L], \\ 
	(s\cdot x)[\ell] &= x[-\ell \bmod L]. 	
	\end{split}
\end{equation}
Namely, $r$ cyclically shifts a signal by one entry, and $s$ reflects the signal.
The action of the dihedral group is illustrated in Figure~\ref{fig:example}.

If we apply the discrete Fourier transform to $\R^L$, then we can identify $\R^L$
with the real subspace of $\C^L$ consisting of $L$-tuples $(\x[0], \ldots , \x[L-1]) \in \C^L$ satisfying the condition $\overline{\x[\ell]} =
\x[-\ell \bmod L]$, where $\overline{\x[\ell]}$ is the conjugate of ${\x[\ell]}$.
In this case, the action of $D_{2L}$ is given by:
\begin{equation}
	\begin{split}
	(r\cdot \x)[\ell] &= e^{2 \pi \iota  \ell  /L} \x[\ell], \\ 
	(s\cdot \x)[\ell] &= \overline{\x[\ell]} = \x[-\ell \bmod L].
	\end{split}
\end{equation}
{\begin{remark} \label{rem:diagonalizable} We can see from this description that the action of $D_{2L}$ cannot be diagonalized for the following
    reason. If the action could be diagonalized, then there would have to be a basis for $\R^L$ which consists of simultaneous eigenvectors of 
    the rotation $r$ and the reflection $s$. However, the only eigenvector of the rotation $r$ which is also invariant under
  the action of the reflection $s$ is the vector $(1,0, \ldots, 0)$. 
  For a further reference, see \cite[p.~37]{serre1977linear}.
\end{remark}}
\subsection{Unique orbit recovery in dihedral MRA}
\label{sec:uniqueness}

We are now ready to present and prove the main result of this paper. 
Let $\rho\in \Delta_{2L}$ be a probability distribution on $D_{2L}$. We denote the probability of $r^k$ by {$p[k]$} and the probability of $r^ks$ by {$q[k]$}. 
Let ${p}$ and ${q}$ represent the vectors {$(p[0], \ldots , p[L-1])$}
and {$(q[0], \ldots , q[L-1])$}, respectively. Let $C_z\in\R^{L\times L}$ be a circulant matrix generated by $z\in \R^L$, 
{namely, the $i$-th column of $C_z$ is given by $z[(i-\ell)\bmod L]$ for $\ell=0,\ldots,L-1.$
Let } $D_z\in\R^{L\times L}$ be a diagonal matrix whose entries are~$z$. 
A direct calculation shows that the first two moments of the observations of~\eqref{eq:mra} are given by the following expressions (compare with~\cite{abbe2018multireference}).

\begin{lemma} \label{lem:moments}
	Consider the dihedral MRA model~\eqref{eq:mra}. 
	The first moment of $y$, $M^1\in\R^L$, is given by
	\begin{equation} 
		\E y := M^1(x,\rho) = C_x {p} + C_{sx} {q}= C_{{p}}x  + C_{{q}} sx. 
	\end{equation}
	The second moment of $y$, $M^2\in\R^{L\times L}$, is given by
	\begin{equation} \label{eq:sec_moment}
		\E yy^T :=M^2(x,\rho) = C_x D_{{p}} C_x^T + C_{sx} D_{{q}} C_{sx}^T+{\sigma^2I}.		
	\end{equation}
\end{lemma}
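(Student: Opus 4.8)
The plan is a direct computation that first isolates the noise contribution and then exploits the structure of the dihedral group together with circulant/diagonal matrix algebra. I would begin by writing $y = g\cdot x + \varepsilon$ and expanding $yy^{T} = (g\cdot x)(g\cdot x)^{T} + (g\cdot x)\varepsilon^{T} + \varepsilon(g\cdot x)^{T} + \varepsilon\varepsilon^{T}$. Since $\varepsilon$ is independent of $g$ and $\E\varepsilon = 0$, the two cross terms vanish in expectation, and $\E[\varepsilon\varepsilon^{T}] = \sigma^{2}I$; similarly $\E y = \E[g\cdot x]$. This reduces both assertions to computing the two ``signal-only'' moments $\E_{g}[g\cdot x]$ and $\E_{g}[(g\cdot x)(g\cdot x)^{T}]$ with $g\sim\rho$.

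Next I would split the expectation over $D_{2L}$ according to the decomposition $\{1,r,\dots,r^{L-1}\}\cup\{s,rs,\dots,r^{L-1}s\}$, with weights $p[k] = \rho(r^{k})$ and $q[k] = \rho(r^{k}s)$. The key observation is that $r^{k}\cdot x$ is the cyclic shift of $x$ by $k$, which is precisely (with the chosen convention) the $k$-th column of the circulant matrix $C_{x}$, i.e.\ $r^{k}\cdot x = C_{x}e_{k}$ for the standard basis vector $e_{k}$; and, because the action is a left action and $s\cdot x = sx\in\R^{L}$ is just another signal, $(r^{k}s)\cdot x = r^{k}\cdot(s\cdot x) = C_{sx}e_{k}$. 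Hence
\begin{equation*}
\E_{g}[g\cdot x] \;=\; C_{x}\sum_{k}p[k]\,e_{k} \;+\; C_{sx}\sum_{k}q[k]\,e_{k} \;=\; C_{x}p + C_{sx}q,
\end{equation*}
and the alternative form $C_{x}p = C_{p}x$, $C_{sx}q = C_{q}sx$ is just the commutativity of circular convolution, $C_{a}b = C_{b}a$, which holds because all circulant matrices are simultaneously diagonalized by the DFT.

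For the second moment, using $(r^{k}\cdot x)(r^{k}\cdot x)^{T} = C_{x}e_{k}e_{k}^{T}C_{x}^{T}$ and likewise $(r^{k}s\cdot x)(r^{k}s\cdot x)^{T} = C_{sx}e_{k}e_{k}^{T}C_{sx}^{T}$, together with $\sum_{k}p[k]\,e_{k}e_{k}^{T} = D_{p}$ and $\sum_{k}q[k]\,e_{k}e_{k}^{T} = D_{q}$, I obtain
\begin{equation*}
\E_{g}[(g\cdot x)(g\cdot x)^{T}] \;=\; C_{x}D_{p}C_{x}^{T} + C_{sx}D_{q}C_{sx}^{T},
\end{equation*}
and adding $\sigma^{2}I$ gives~\eqref{eq:sec_moment}. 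It is worth remarking that this computation uses nothing about $\rho$ beyond its being a probability vector, and nothing about $x$; the genericity hypotheses enter only later, in the identifiability argument of Theorem~\ref{thm:main}.

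I do not expect a genuine obstacle here: the content is elementary linear algebra. The only thing that requires care is bookkeeping with the $\bmod\,L$ indices and sign conventions --- one must fix the circulant convention so that the $k$-th column of $C_{x}$ is exactly $r^{k}\cdot x$ (the transpose in $C_{x}D_{p}C_{x}^{T}$ then appears naturally from the rank-one products), and one must check that the enumeration $\{r^{k}s\}$ together with the relation $rs = sr^{-1}$ is consistent with $(r^{k}s)\cdot x = r^{k}\cdot(s\cdot x)$ so that the reflection part really does factor through $C_{sx}$. Once these conventions are pinned down, the identities fall out immediately.
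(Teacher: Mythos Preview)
Your proposal is correct and is precisely the ``direct calculation'' the paper alludes to without writing out; the paper gives no proof of Lemma~\ref{lem:moments}, merely stating that the formulas follow by direct computation (and referring to~\cite{abbe2018multireference} for comparison). Your identification $r^{k}\cdot x = C_{x}e_{k}$, $(r^{k}s)\cdot x = C_{sx}e_{k}$ and the use of $\sum_{k}p[k]\,e_{k}e_{k}^{T}=D_{p}$ is exactly the intended argument.
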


{Hereafter, we assume that the noise variance $\sigma^2$ is known, and thus the bias term $\sigma^2I$ can be removed. Indeed, the variance of the average of each observation $\frac{1}{\sqrt{L}}\sum_{\ell=0}^{L-1}y_i[\ell]$ (which is invariant under the group action) is an unbiased estimator of the noise variance, and is consistent as $\sigma^4/n \to 0$. We also remark that in many applications, including cryo-EM, the noise level can often be readily estimated from the data~\cite{bendory2020single}.}


To present the main result of this paper, it will be convenient to consider the Fourier counterpart of the moments, defined by
 \begin{align*}
	\hM^1 &= \E Fy = FM^1(x,\rho), \\
	\hM^2 &= \E Fy
	(Fy)^*= FM^2(x, \rho)F^{*},
\end{align*}
where $F\in\C^{L\times L}$ is the discrete Fourier transform (DFT) matrix.   

We say that a condition holds for
 generic signals (or distributions) if the set of signals (distributions) for which the condition
does not hold is defined by polynomial conditions. {The precise meaning of generic signals, in the context of this work, is discussed at the end of this section.}

The main result of this paper is as follows.
\begin{thm}[Orbit recovery] \label{thm:main}
  For generic signal $x$ and generic distribution $\rho$, the $D_{2L}$ orbit of~$x$ is
  uniquely determined   
  by $\hM^1[0]= \x[0]$  and at most $\sim 2.5L$ entries of the matrix $\hM^2$.
More precisely, there exist non-zero polynomials $Q_1,\dots,Q_r$ such that if $Q_1(x,\rho),\dots,Q_r(x,\rho)$ are not all zero, then for any $(z,\rho')$ with $M^1(z,\rho') = M^1(x,\rho)$ and $M^2(z,\rho') = M^2(x,\rho)$, $z$ is in the same~$D_{2L}$ orbit as $x$.
\end{thm}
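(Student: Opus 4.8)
The plan is to exploit the explicit form of the moments in Lemma~\ref{lem:moments}, passed to the Fourier domain, to recast orbit recovery as an algebraic problem, and then to argue genericity by exhibiting a single signal/distribution pair for which recovery succeeds. Concretely, in the Fourier basis the circulant matrices $C_x$, $C_{sx}$ become diagonal, so $\hM^2$ decomposes entrywise: writing $\hx[\ell]$ for the Fourier coefficients of $x$ and $\hp[\cdot], \hq[\cdot]$ for the DFTs of $p, q$, the $(\ell,m)$ entry of $\hM^2$ is a bilinear-type expression of the form $\hx[\ell]\overline{\hx[m]}\,\hp[\ell-m] + \hx[-\ell]\overline{\hx[-m]}\,\hq[\ell-m]$ (up to indexing conventions, with $s$ acting by conjugation/reflection on the Fourier side as described in Section~\ref{sec:dihedral}). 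The first moment gives $\hM^1[0] = \hx[0]$. So the unknowns are the $2L$-ish real parameters in $(\hx, \hp, \hq)$, and the task is to show that from $O(L)$ of these polynomial equations one can, generically, pin down the $D_{2L}$-orbit of $\hx$.

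The key steps, in order, are: (i) Use the diagonal entries $\hM^2[\ell,\ell] = |\hx[\ell]|^2 \hp[0] + |\hx[-\ell]|^2\hq[0]$ together with $\hp[0]+\hq[0]=1$ (probabilities sum to one) and the reality constraint $|\hx[\ell]| = |\hx[-\ell]|$ — which holds because $x$ is real — to read off $|\hx[\ell]|^2 = \hM^2[\ell,\ell]$ for all $\ell$; this recovers the power spectrum outright and fixes $\hx[0]$ up to sign, which is then resolved by $\hM^1[0]$. (ii) Use a band of near-diagonal entries, say $\hM^2[\ell,\ell-1]$ and $\hM^2[\ell,\ell-2]$ (hence the $\sim 2.5L$ count: $L$ diagonal plus $\sim 1.5L$ off-diagonal), to set up equations coupling consecutive phase differences $\arg\hx[\ell] - \arg\hx[m]$ with the unknowns $\hp[\ell-m], \hq[\ell-m]$. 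The structure here mirrors the cyclic case of~\cite{abbe2018multireference}, except that the extra $\hq$-term (the reflection part) obstructs the monomial structure; the point is that generically the system still has the orbit of $x$ as an isolated solution set. (iii) Translate "generically solvable" into the polynomial language demanded by the theorem statement: the locus of bad $(x,\rho)$ is contained in the vanishing locus of finitely many polynomials $Q_1,\dots,Q_r$, so it suffices to produce one explicit pair $(x_0,\rho_0)$ — e.g. a signal with distinct, nonzero Fourier magnitudes and a distribution with $p, q$ chosen so the relevant Jacobian/resultant is nonzero — at which recovery demonstrably works. By semicontinuity (a nonzero polynomial is nonzero on a dense open set), this forces the $Q_i$ not to be identically zero, which is exactly the claim.

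The main obstacle is step (ii): handling the non-abelian structure, i.e., the fact that $\hM^2$ is a genuine sum of a "$p$-part" and a "$q$-part" rather than a single diagonal-conjugated circulant, so the entries are not monomials and the clean inductive phase-propagation argument of the cyclic case does not directly apply. I expect to deal with this by choosing the off-diagonal band carefully so that, after using the power spectrum from step (i), the remaining equations become, for generic $\rho$, a triangular-in-disguise system in the phase differences — or, failing an explicit such argument, by invoking an algebraic-geometry dimension count: show the generic fiber of the moment map $(x,\rho)\mapsto (M^1,M^2)$ restricted to the chosen coordinates has dimension equal to that of the orbit times the generic fiber over $\rho$, and is a single orbit, again certified by one explicit example. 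A secondary subtlety is bookkeeping the reflection action on Fourier coefficients (conjugation composed with index negation) so that "same $D_{2L}$-orbit" is correctly identified — two solutions related by $r^k$ differ by a linear phase $e^{2\pi\iota k\ell/L}$ in each coordinate, and one related by $s$ by conjugation; the equations must be shown to determine $\hx$ exactly up to these two moves and no more.
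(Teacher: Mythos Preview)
Your high-level plan --- Fourier domain, recover the power spectrum from the diagonal via $\p[0]+\q[0]=1$, then chase phases along near-diagonal bands, and certify genericity at a single point --- matches the paper's structure, and step~(i) is exactly right. The gap is step~(ii), which you correctly flag as the main obstacle but leave unresolved. The paper's key mechanism, which neither of your two proposed attacks captures, is to observe that for fixed $i+j$ the moment equations $\p'[i+j]\,\z[i]\z[j]+\q'[L-i-j]/(\z[i]\z[j])=M_{i,j}$ (after normalizing to $|\z[\ell]|=1$) are \emph{linear} in the two unknowns $(\p'[i+j],\q'[L-i-j])$; taking three or more pairs $(i,j)$ with the same sum gives an over-determined linear system whose consistency conditions are polynomial equations in the $\z[\cdot]$ alone, with the distribution eliminated. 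From the family $i+j=1$ one extracts a quadratic for $\z[n+1]$ in terms of $\z[1],\z[2],\z[n]$; from the family $i+j=n+1$ a second, independent quadratic; together these give $\z[n+1]$ as an explicit rational function of $\z[1],\z[2],\z[n-1],\z[n]$. Iterating reduces everything to $(\z[1],\z[2])$; a palindromic-quartic argument (reduced via complex conjugation to a real palindromic quadratic) then leaves at most two values of $\z[2]$ per $\z[1]$; and weighted homogeneity of all the equations together with the reality constraint at index $\lfloor L/2\rfloor$ forces $\z[1]$ into the $\Z_L$-orbit of $\x[1]$, yielding exactly $2L$ solutions --- precisely the orbit.

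Your fallback --- exhibit one $(x_0,\rho_0)$ where recovery works and invoke semicontinuity --- is not sufficient as stated: a single good point does not by itself show that the bad locus is contained in a proper Zariski-closed set, nor does it produce the polynomials $Q_i$. In the paper the $Q_i$ arise concretely as the non-vanishing conditions on the denominators of the rational recursions and on the coefficients $B_1,B_2$ of the final quadratic; only \emph{after} these explicit polynomials are in hand does one verify non-vanishing at a specific point. A pure dimension/Jacobian count (your other fallback) would give list recovery but not orbit recovery without an additional degree computation; the paper discusses that route abstractly in Section~\ref{sec:general_theory} but does not use it for Theorem~\ref{thm:main}.
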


\begin{remark} Our method of proof necessarily requires that all of the entries
    of $Fx$ are non-zero, where $F$ is the discrete Fourier transform matrix; similar assumptions are often stated in the MRA literature, see for example~\cite{bendory2017bispectrum,abbe2017sample,perry2019sample,bandeira2020optimal}.
     However, our proof also requires that additional, less explicit, polynomials in the entries of $x,\rho$ be non-vanishing. This is discussed at the end of the proof.
  \end{remark}

\begin{proof}
	Let us define
 \begin{align*}
 { \hat{ p}} &= (\p[0],\ldots , \p[L-1]),\\ 
{\hat{ q}} &= (\q[0], \ldots , \q[L-1]), \\
\hat x &= (\x[0], \ldots , \x[L-1]).
\end{align*}  
Note that the second moment in Fourier domain can be written as
  \begin{equation}
\hM^2 = \frac{1}{L}\left(D_{\hat x} C_{\hat{ p}} D_{\overline{\hat x}} +
D_{\overline{\hat{ x}}} C_{\hat{q}}D_{\hat{ x}}\right).
  \end{equation}
  Moreover, since ${ p}, { q}, x$ are real, we have {the symmetry relations}
  \begin{align*}
\x[L-i] &= \overline{x[i]}, \\
\p[L-i] &= \overline{\p[i]}, \\
\q[L-i] &= \overline{\q[i]}.
  \end{align*}

  Define  
  \begin{equation} \label{eq.Mij} M_{i,j} = \p[i+j] \x[i] \x[j] + \q[L-i-j]\x[L-i] \x[L-j],
    \end{equation}
  so
  $M_{i,j}$ is $L \hM^2[i,L-j]$. Our goal is to show that
  knowledge of $\hM^1[0]$ and $O(L)$ of the entries $M_{i,j}$ determine the
  orbit of $x$.

 Since $\rho$ is a probability distribution, we note that
  \begin{equation}
  \p[0] + \q[0] = p[0] + \ldots + p[L-1] + q[0] + \ldots + q[L-1] = 1.
  \end{equation}
  Thus, $M_{i,-i} = |\x[i]|^2$. It follows that knowledge of $M^2(x, \rho)$
  determines the power
  spectrum of~$x$. Replacing $x$ by the vector whose Fourier transform
  {has entries $\x[i]/|\x[i]|$}, we may assume that each~$\x[i]$ lies on the unit circle. Since
  $\x[0]$ is real, we take $\x[0] = 1$.
  With this assumption, the formula for $M_{i,j}$ can be written as 
  \begin{equation}
    M_{i,j} = \p[i+j] \x[i] \x[j] + \q[L-i-j]/(\x[i] \x[j]).
    \end{equation}
  
Given a vector $x$ and distribution $\rho$, consider the set $I$ of vectors $z \in \R^L$ such that $M^1(z, \rho') = M^1(x,\rho)$ and $M^2(z, \rho') = M^2(x,\rho)$ for some probability distribution $\rho'$
  on the dihedral group $D_{2L}$.
  We will show that for generic $(x, \rho)$ there are only
  $2L$  possible $z$'s in this set.
Note that the distribution is uniquely determined by the signal $z$, because the moments are linear functions of the distribution.
  Since the $D_{2L}$ orbit of $x$ is contained
  in the set $I$, we conclude that the orbit of $x$ is determined by
  the moments of degree one and two.

  Determining that the set $I$ consists of at most $2L$ vectors is equivalent to showing that the following system of 
  equations has at most $2L$ solutions:
    \begin{equation} \label{eq.thesystem}
  	\p'[i+j] \z[i] \z[j] + \q'[L-i-j]/(\z[i] \z[j]) = M_{i,j},
  \end{equation}
   where $|\z[i]| = 1$, $\z$ is the Fourier transform of a vector in $\R^L$, and ${\hat{\rho}'} = (\p', \q')$
  is the Fourier transform of a probability distribution on $D_{2L}$.
	Consider the equations 
	\begin{equation} \label{eq.clutch}
	  \p'[1] \z[\ell] \z[1 -\ell] + \q'[L-1]/(\z[\ell] {\z[1 -\ell])} =
          M_{\ell, 1 -\ell}, \quad \ell=0,\ldots,L-1,
	\end{equation}
        where the indices are taken modulo $L$.
	For each fixed $\ell$, we can view equation~\eqref{eq.clutch} as a linear
	equation in $\p'[1], \q'[L-1]$.
	For the system to  have a solution, it must be consistent.
	Taking the pair of equations when $\ell = 1$ and $\ell = m+1$ with $m \geq 1$, we obtain 
	\begin{equation} \label{eq.n0}
	  {\q'[L-1]} = 
             {{\left(\z[1] \z[m+1]^2 M_{1,0} - \z[1]^2 \z[m]\z[m+1] M_{m+1,-m}
                 \right)}\over{\z[m+1]^2 - \z[1]^2\z[m]^2}}.
	\end{equation}
	Equating equation~\eqref{eq.n0} with $m =1$ and $m=n+1$, 
        we see that $\z[n+1]$ satisfies the 
        following quadratic equation in terms of $\z[1], \z[2], \z[n]$:
	\begin{equation} \label{eq.quadratic10}
		\begin{split}
		  (M^2_{2,-1} \z[1] \z[2] - M_{1,0} \z[1]^3) \z[n+1]^2&+ M^2_{n+1,-n}(\z[1]^4\z[n] -\z[2]^2\z[n])\z[n+1] + \\&  M^2_{1,0}\z[1]\z^2[2]\z[n]^2 -
                  M^2_{2,-1} \z[1]^3\z[2] \z^2[n] = 0.
		\end{split}
        \end{equation}
       Note that expressions of the form $M^2_{i,j}$ refer to exponents in this formula.

        If $n > 2$, the three equations from~\eqref{eq.thesystem}  with
        $(i,j) = (n+1,0), (n,1), (n-1,2)$, respectively,
        yield three linear equations for $\p'[n+1], \q'[L-n-1]$ whose
        coefficients are rational expressions in $\z[1], \z[2],\z[n-1],\z[n],
        \z[n+1]$.
        The same analysis as above shows that {$\z[n+1]$}
        satisfies an additional
        quadratic equation in $\z[1],\z[2],\z[n-1],\z[n]$: 
        \begin{equation} \label{eq.quadraticnplusone}
        	\begin{split}
&          (M^2_{n,1} \z[1]\z[n] - M^2_{n-1,2} \z[2]\z[n-1])\z[n+1]^2+ M_{n+1,0}(\z[2]^2\z[n-1]^2 - \z[1]^2\z[n]^2)\z[n+1] \\ 
& + (M^2_{n-1,2}
\z[1]^2\z[2]\z[n-1]\z[n]^2 - M^2_{n,1} \z[1]\z[2]^2 \z[n-1]^2 \z[n]) = 0.
        	\end{split}
        \end{equation}
        Since $\z[n+1]$ satisfies the two non-equivalent quadratic equations
        \eqref{eq.quadratic10} and \eqref{eq.quadraticnplusone}, we can solve for $\z[n+1]$ in terms of $\z[1],\z[2]$ and $\z[n]$ and we obtain the following
        expression for $\z[n+1]$ as a rational function of $\z[1], \z[2], \z[n-1], \z[n]$:
\begin{equation}  \label{eq.znplusonerational}
	\z[n+1] = \frac{a}{b},
\end{equation}        
where 
\begin{equation} \label{eq.numerator}
	\begin{split}
	a &= M^2_{1,0}M^2_{n,1}(\z[1]^2\z[2]^2\z[n]^3 -
\z[1]^4\z[2]^2\z[n-1]^2\z_n)  + 
M^2_{2,-1}M^2_{n,1}(\z[1]^2\z[2]^3\z[n-1]^2 \z[n] -
\z[1]^4\z[2]\z[n]^3) \\ &+ M^2_{1,0}M^2_{n-1,2}(\z[1]^5\z[2]\z[n-1]
\z[n]^2 -\z[1]\z[2]^3 \z[n-1]\z[n]^2),
	\end{split}
\end{equation}
and 
\begin{equation} \label{eq.denominator}
	\begin{split}
	b&= M_{n-1,1}M_{n+1,-n}(\z[1]\z[2]^2\z[n]^2 -\z[1]^5\z[n]^2) +
M^2_{n-1,2}M^2_{n+1,-n}(\z[1]^4\z[2]\z[n-1]\z[n] -
\z[2]^3\z[n-1]\z[n]) \\ &+  M^2_{n+1,0}M^2_{2,-1}(\z[1]\z[2]^3\z[n-1]^2
-\z[1]^3\z[2]\z[n]^2) + M^2_{n+1,0}M^2_{1,0}(\z[1]^5\z[n]^2 -
\z[1]^3\z[2]^2\z[n-1]^2).	
\end{split}
\end{equation}
        When $n =2$, the equations in \eqref{eq.thesystem} corresponding to $(i,j) = (2,1)$ and
        $(i,j) =(1,2)$ are identical so we need another method to express
        $\z[3]$ as a rational function of $\z[1],\z[2]$.
        To get a second quadratic equation in this case, consider the equations of \eqref{eq.thesystem} corresponding to the pairs $(2,0), (1,1), (3,-1)$ to obtain
the quadratic equation
        \begin{equation} \label{eq.quadratic32}
          (M_{1,1}^2 \z[1]^2 - M^2_{2,0} \z[2]) \z[3]^2 + M^2_{3,-1} \z[1]
          (\z[2]^2 -\z[1]^4) + \z[1]^4\z[2](M^2_{2,0}\z[1]^2 - M^2_{1,1}\z[2]) = 0.
        \end{equation}
        We then obtain the following expression for $\z[3]$ as a rational
        function of $\z[1]$ and $\z[2]$:
        \begin{equation} \label{eq.z3rational}
          {{\z[1]\z[2](M^2_{1,0}M^2_{2,0} \z[1]^4 - M_{1,0}M^2_{1,1}\z[1]^2\z[2]
              - M_{2,0}M_{2,-1}\z[1]^2\z[2] + M_{1,0}M_{2,0}\z[2]^2)}
            \over{(M_{1,0}M_{3,-1}\z[1]^4 - M_{2,-1}M_{3,-1}\z[1]^2\z[2] -
              M_{1,1}M_{3,-2}\z[1]^2\z[2] + M_{2,0}M_{3,-2} \z[2]^2)}}.
        \end{equation}

At this point we have shown that knowledge of $\z[1], \z[2]$ determine
        $\z[n+1]$ for $n \geq 2$, assuming that the rational expressions
        \eqref{eq.znplusonerational} and \eqref{eq.z3rational} are well defined {(see the discussion at the end of the proof)}.
        We can also use the quadratic equations \eqref{eq.quadratic10} (with $n=3$) and \eqref{eq.quadratic32} to obtain a second expression for $\z[3]^2$
        as {a} rational function of
        $\z[1]$ and $\z[2]$.
        Equating this expression for $\z[3]^2$ with the square
        of the expression for $\z[3]$ given by \eqref{eq.z3rational},
        we obtain the following palindromic quartic equation for $\z[2]$ in terms
        of $\z[1]$:
\begin{equation} \label{eq.quartic}
  A_0 \z[1]^8 + A_1\z[1]^6 \z[2] + A_2\z[1]^4 \z[2]^2 + A_1\z[1]^2\z[2]^3
  + A_0 \z[1]^4 =0,
\end{equation}
where 
\begin{eqnarray*}
  A_0 & = & M_{1, 0}^2 M_{2, 0}^2 - M_{1, 0} M_{2, 0} M_{3, -2} M_{3, -1}\\
  A_1 & = & -2 M_{1, 0}^2 M_{1, 1} M_{2, 0} - 2 M_{1, 0} M_{2, -1} M_{2, 0}^2 + 
  M_{1, 1} M_{2, 0} M_{3, -2}^2\\
  & & + M_{1, 0} M_{1, 1} M_{3, -2} M_{3, -1}+
 M_{2, -1} M_{2, 0} M_{3, -2} M_{3, -1} + M_{1, 0} M_{2, -1} M_{3, -1}^2\\
 A_2 & = & M_{1, 0}^2 M_{1, 1}^2 + 2 M_{1, 0} M_{1, 1} M_{2, -1} M_{2, 0} + 
 2 M_{1, 0}^2 M_{2, 0}^2 + M_{2, -1}^2 M_{2, 0}^2\\
 & & - M_{1, 1}^2 M_{3, -2}^2 -
  M_{2, 0}^2 M_{3, -2}^2 - 2 M_{1, 1} M_{2, -1} M_{3, -2} M_{3, -1} - 
  M_{1, 0}^2 M_{3, -1}^2 - M_{2, -1}^2 M_{3, -1}^2.
\end{eqnarray*}

Taking the complex conjugate of \eqref{eq.quartic} and using the fact {that}
$\z[i]$ lies on the unit circle so $\overline{\z[i]} = \z[i]^{-1}$, we obtain  
\begin{equation} \label{eq.quarticinv}
  \overline{A_0} \z[1]^{-8} + \overline{A_1}\z[1]^{-6}\z[2]^{-1} +
  \overline{A_2}\z[1]^{-4}\z[2]^{-2} + \overline{A_1}\z[1]^{-2}\z[2]^{-3} +
  \overline{A_0} \z[1]^{-4}{=0}.
\end{equation}
Multiplying~\eqref{eq.quarticinv} by $\z[1]^8\z[2]^4$, we obtain a second quartic equation satisfied by $\z[2]$:
\begin{equation} \label{eq.quarticconj}
  \overline{A_0} \z[1]^{8} + \overline{A_1}\z[1]^{6}\z[2] +
  \overline{A_2}\z[1]^{4}\z[2]^{2} +
  \overline{A_1}\z[1]^{2}\z[2]^{3} + \overline{A_0} \z[2]^{4}{=0}.
\end{equation}
Now take $\sqrt{-1}(
A_0\eqref{eq.quarticconj} -\overline{A_0} \eqref{eq.quartic} )$
and we obtain the following equation with real coefficients
\begin{equation}
  \z[1]^2\z[2]( B_1\z[1]^4 + B_2\z[1]^2\z[2] + B_1\z[2]^2) =0,
\end{equation}
where $B_1 = 2 \Im(\overline{A_0}A_1)$ and $B_2 = 2 \Im(\overline{A_0}A_2)$ ($\Im$ stands for the imaginary part of a complex number).
Since $\z[1],\z[2] \neq 0$, we see that $\z[2]$ satisfies the real palindromic
equation
\begin{equation} \label{eq.final}
B_1 \z[1]^4 + B_2 \z[1]^2\z[2] + B_1 \z[2]^2 = 0.
\end{equation}
Since the equation is palindromic, if $\z[2]$ is a root then
$1/\z[2] = \overline{\z[2]}$ is also necessarily a root.

At this point we have shown that given $\z[1]$, there are (at most) two possible values
for $\z[2]$ provided that $B_1, B_2$ are non-zero. Once we have $\z[1], \z[2]$, the values of
$\z[3], \ldots , \z[L/2]$ are uniquely determined, assuming that the rational expressions~\eqref{eq.znplusonerational} and \eqref{eq.z3rational} are well-defined.
However, we have no constraints on $\z[1]$ other than it lies on
the unit circle. Indeed, 
{the polynomial equations \eqref{eq.quadratic10}, \eqref{eq.quadraticnplusone}, \eqref{eq.quadratic32}, \eqref{eq.final}}
are weighted homogeneous where the variable $\z[n]$ has weight~$n$. In other words, 
if $(\z[1], \z[2], \z[3], \ldots , \z[L/2])$ is a solution, then
$(\lambda \z[1], \lambda^2 \z[2], \ldots , \lambda^{L/2} \z[L/2])$
{will be a solution for any $\lambda \in S^1$.}
When $L$ is even we obtain a constraint on $\z[1]$ by noting that
$\z[L/2]^2 =1$ since $z[L/2] = 1/z[L/2]$ because $L/2 = L-L/2$. Hence we must have $(\lambda^{L/2})^2 =1$;
i.e., $\lambda^L =1$, so $\lambda$ is an $L$-th root of unity. Hence our system can have at most $2L$ solutions.
When $L$ is odd,
{we observe that $\z[(L-1)/2] = \overline{\z[L-(L-1)/2]} = \overline{\z[(L+1)/2]} = \z[(L+1)/2]^{-1}$, and so $\z[(L-1)/2] \z[(L+1)/2] = 1$; and replacing $\z[k]$ with $\lambda^k\z[k]$, we find $\lambda^L\z[(L-1)/2] \z[(L+1)/2] = 1$, i.e.,  $\lambda^L = 1$. Hence, our system can only have at most $2L$ solutions in this case as well.
}

{{\bf Generic Conditions.} To complete the proof, we explain why for generic
  $(x,\rho)$ with all $\hat{x}[i]$ non-zero, the quadratic
  equation~\eqref{eq.final} is non-zero and the rational expressions~\eqref{eq.znplusonerational} and \eqref{eq.z3rational} are well-defined. To show that~\eqref{eq.final} is non-vanishing for generic $(x,\rho)$ we must show that
  $\overline{A_0}A_1$ and
  $\overline{A_0}A_2$ are not pure real. This is a real polynomial condition on $A_0,A_1, A_2$, which are themselves polynomials in the entries of $\rho$
  and $x$. To prove that this condition holds generically, it suffices to prove that this is the case for a single choice of $(x,\rho)$. {Moreover, since
    the simplex is Zariski dense in the linear subspace $\sum p[i] + q[i] = 1$,
    it suffices to verify this when the vector $\rho$ lies in this subspace
    without necessarily being a {probability} distribution. Applying
  the Fourier transform, it suffices to verify that the condition
  holds for a single pair $(\hat{x}, \hat{p})$ with $\hat{p}[0] + \hat{q}[0] =1$.
  The expressions for $A_0, A_1,A_2$
  are determined by the moment entries $M_{1,0}, M_{2,-1}, M_{3,-2}, M_{2,0}, M_{1,1}, M_{3,-2}$, which are in turn determined by the seven values $\hat{x}[1], \hat{x}[2], \hat{x}[3],
  \hat{p}[1], \hat{p}[2], \hat{q}[L-1], \hat{q}[L-2]$. In particular
  if we set $\{x[1],x[2],x[3]\} = \{1,1,\sqrt{-1}\}$,
  $\{\hat{p}[1], \hat{p}[2]\} = \{1,1\}$ and $\hat\{q[L-1],q[L-2]\} = \{1+\sqrt{-1}, \sqrt{-1}\}$, 
  then $B_1 = 16$ and $B_{2} = -32$.}

  Since $(z,\rho') = (x, \rho)$ automatically satisfies the system of
  equations~\eqref{eq.thesystem}, it follows that $\z[1] = {\lambda} x[1]$,
  where ${\lambda}$ is an $L$-th root of unity. Moreover, we know that
  that when $\z[1] = \x[1]$, the quadratic equation~\eqref{eq.final} has solutions $\z[2] = \{\x[2], 1/\x[2]\}$. Hence, if 
  $\z[1] = {\lambda} \x[1]$,  then \eqref{eq.final} has solutions $\z[2] = \{{\lambda}^2 \x[2], 1/({\lambda}^2 \x[2])\}$. It follows that the rational expression~\eqref{eq.z3rational} is well-defined as long the polynomial expressions
$$  (M^2_{1,0}M^2_{2,0} \z[1]^4 - M_{1,0}M^2_{1,1}\z[1]^2\z[2]
  - M_{2,0}M_{2,-1}\z[1]^2\z[2] + M_{1,0}M_{2,0}\z[2]^2),$$
  and
  $$(M_{1,0}M_{3,-1}\z[1]^4 - M_{2,-1}M_{3,-1}\z[1]^2\z[2] -
  M_{1,1}M_{3,-2}\z[1]^2\z[2] + M_{2,0}M_{3,-2} \z[2]^2),$$
  are both non-zero when $\{\z[1],\z[2]\} = \{{\lambda} \x[1], {\lambda}^2 \x[2]\}$
  or $\{\z[1],\z[2]\} = \{ {\lambda} \x[1], 1/({\lambda}^2 \x[2])\}$. If this is the case, then it follows that $\z[3] = {\lambda}^3 \x[3]$ or $\z[3] = 1/({\lambda}^3 \x[3])$ because we know that $(\x[1], {\lambda} \x[2], {\lambda}^3 \x[3])$ and $(\x[1], 1/({\lambda}^2 \x[2]),
  1/({\lambda}^2 \x[3]))$ are the first three entries of a vector in the $D_{2N}$ orbit
  of the vector $\x$. {Using \eqref{eq.numerator} and \eqref{eq.denominator}, we} can now continue recursively to obtain sufficient genericity conditions on the pair $(x,\rho)$.
}
\end{proof}
  \begin{remark}
    As can be seen from the proof, we only use $\sim 5L/2$ of
    the entries of $M_{i,j}$ (out of $L^2$ entries overall) to determine the orbit of $x$. Precisely, we only use
    the $\sim 5L/2$ entries  $M_{\ell,-\ell},  M_{\ell, 1-\ell},M_{\ell+1,0}, M_{\ell,1} $ and $ M_{\ell-1,2}$ for  $\ell=0,\ldots,L/2$.
     A similar observation was made in~\cite{bendory2021signal}. 
  \end{remark}

{
\subsection{General theory for MRA  with a general distribution over finite groups}
\label{sec:general_theory}
The purpose of this section is to discuss the theory of moments
for the MRA problem for finite groups. Our goal is to highlight the mathematical differences between uniform and generic distributions on the group $G$.
Precisely, the dihedral MRA model~\eqref{eq:mra} we consider here is a special
case of the following MRA problem: 

Recover  a signal $x \in {V}$ from moment measurements of $g_i \cdot
x + \epsilon_i$, where the group elements $g_i$ are chosen `at random'
from a finite group~$G$ and~${V}$ is a
finite dimensional vector space.\\

\subsubsection{Uniform distribution} 
The case of a uniform distribution of  
the group elements $g \in G$ was studied in depth in~\cite{bandeira2017estimation}. 
For the uniform distribution, the $n$-th moment 
$$M^{n}= {1\over |G|} \sum_{g\in G} gx^{\otimes n},$$
is a tensor whose components generate the vector space
of invariant polynomial functions of degree $n$ on $V$.
An important theoretical result {whose proof uses Jennrich's algorithm for decomposing a three-tensor} is the following
theorem:
\begin{thm*} \cite[Theorem D.2]{bandeira2017estimation}
  Let $G$ be a finite group and let $V$ be the regular representation
  of $G$ over $\R$, then the generic orbit $Gx$ consists of linearly independent vectors and consequently generic recovery is possible from degree 3 invariants.
\end{thm*}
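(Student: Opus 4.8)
The plan is to establish two facts and combine them: (i) for a generic $x$ in the regular representation $V = \R[G]$, the $N := |G|$ translates $\{h\cdot x : h\in G\}$ are linearly independent (hence distinct, so $|Gx| = N$); and (ii) linear independence of the orbit forces the third moment tensor to admit an essentially unique rank-one decomposition, from which the orbit can be read off, giving generic recovery from the degree-$3$ invariants.

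For (i) I would use the standard basis $\{e_g\}_{g\in G}$ of $V = \R[G]$, on which $G$ acts by $h\cdot e_g = e_{hg}$, and identify a signal with a coordinate tuple $x = (x_g)_{g\in G}$. Form the $N\times N$ ``group matrix'' $A(x)$ whose $h$-th column is the translate $h\cdot x$, i.e.\ $A(x)_{g,h} = x_{h^{-1}g}$. Then $\det A(x)$ is a polynomial in the coordinates $x_g$ that is not identically zero: at $x = e_1$ we have $h\cdot e_1 = e_h$, so $A(e_1)$ is the identity matrix and $\det A(e_1)=1$. Hence $\det A(x)\neq 0$ on a Zariski-dense open subset, on which the $N$ translates are linearly independent, in particular distinct, so they form a basis of $V$ and $|Gx| = N$. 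Equivalently one could cite Frobenius's factorization $\det A(x) = \prod_{\pi}\det(\sum_g x_g\,\pi(g))^{\dim\pi}$ over the irreducible representations of $G$, each factor being $1$ at $x = e_1$ and hence not identically zero. The exceptional locus is cut out by polynomial equations, so the condition is generic in the sense used in this paper.

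For (ii), under the uniform distribution $M^3 = \frac{1}{N}\sum_{h\in G}(h\cdot x)^{\otimes 3}$ is a symmetric $3$-tensor written as a sum of $N$ symmetric rank-one terms whose factors $h\cdot x$ are, by step (i), linearly independent (so certainly pairwise non-proportional). I would invoke Jennrich's simultaneous-diagonalization algorithm — equivalently Kruskal's uniqueness theorem, whose hypothesis here reads $3N \ge 2N + 2$, i.e.\ $N\ge 2$, the case $N=1$ being trivial — to conclude that this is the unique decomposition of $M^3$ into $N$ rank-one tensors and that the components $(h\cdot x)^{\otimes 3}$ can be extracted algorithmically from $M^3$. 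Over $\R$, a symmetric rank-one $3$-tensor $w^{\otimes 3}$ determines $w$ uniquely, since the only real cube root of $1$ is $1$; hence from $M^3$ we recover exactly the set $\{h\cdot x : h\in G\} = Gx$. Since the entries of $M^3$ span the degree-$3$ invariant polynomials on $V$ for the uniform distribution, as recalled above, generic orbit recovery from degree-$3$ invariants follows.

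The step I expect to be the crux is the linear independence in (i): it is the only part carrying real geometric content, and it is short here only because the regular representation contains the distinguished point $x = e_1$ whose orbit is literally the standard basis, which forces the group determinant to be a nonzero polynomial; everything after that is bookkeeping together with off-the-shelf tensor-decomposition theory. I would also be careful to flag that this argument is special to the regular representation — for a general representation $V$ there is no such distinguished point, and the conclusion can fail outright, for instance whenever $\dim V < |G|$.
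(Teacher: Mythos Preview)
Your proposal is correct and matches the approach the paper attributes to the original source: the paper does not prove this theorem itself but cites it from \cite{bandeira2017estimation}, noting only that the proof ``uses Jennrich's algorithm for decomposing a three-tensor.'' Your two-step argument --- generic linear independence of the orbit via the nonvanishing of the group determinant at $x = e_1$, followed by Jennrich/Kruskal uniqueness applied to $M^3 = \tfrac{1}{|G|}\sum_h (h\cdot x)^{\otimes 3}$ --- is precisely the intended proof, and your remarks on the real cube-root ambiguity and the necessity of the regular-representation hypothesis are accurate.
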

(The regular representation of a finite group
is the $|G|$ dimensional vector space of functions $G \to \R$ where the group $G$ acts by $(g\circ f)(h) = f(g^{-1}h)$.)

Since $\R^L$ is the regular representation of the cyclic group $\Z_L$, the Theorem above implies that for the uniform distribution
on $\Z_L$ the generic vector $x \in \R^L$ can be recovered from the third order moment: a result originally proved in \cite{bendory2017bispectrum,perry2019sample}.
Note, however, that this result cannot be applied
for the action of $D_{2L}$ on $\R^{L}$ because $\R^L$ is not the regular representation of $D_{2L}$ since its dimension is smaller than the order of the group $D_{2L}$.
As a result, we do not know if the first three moments suffice to recover
a generic orbit when the distribution in $D_{2L}$ is uniform.

\subsubsection{Generic distributions}
We now give a theoretical analysis of the situation where
the group elements $g_i$ are taken from a {\em
  generic distribution} on the finite group  $G$,
as we do here for the dihedral group $D_{2L}$ and as was done in \cite{abbe2018multireference} for the cyclic group $\Z_L$.

Observe that a probability distribution on a finite group
is a function $\rho\colon G \to \R$ satisfying the conditions
$\rho(g) \geq 0$ for all $g \in G$ and $\sum_{g \in G} \rho(g) = 1$.
Thus, a probability distribution is a vector $\rho$ in the regular representation which lies in the simplex $\Delta_{|G|} \subset R(G)$, where $R(G)$ denotes
the regular representation.  
By definition, the $n$-th order moment associated to a probability distribution~$\rho$ on~$G$, 
$M^n := \sum_{g \in G} \rho(g) (gx)^{\otimes n}$
is a $n$-tensor of
invariant polynomials
of bidegree $(1,n)$ on $R(G) \times V$. 
Of particular interest in this paper is the second order moment
$M^2(x,\rho) = \sum_{g \in G}  \rho(g) (gx) (gx)^T$, when $G$ is the dihedral group.
In this case, the second order moment gives a collection of invariant functions of total
degree $3$ on $R(G) \times V$.


The following result which is of purely theoretical interest
states that the
 orbit of a generic pair $(\rho,x) \in R(G)$ can be determined from the full
 collection of degree $3$ invariant polynomials.
\begin{prop} \label{prop.generic_distribution}
	The set of all degree 3 invariants on $R(G) \times V$ determines the $G$-orbit
	of a generic pair $(\rho,x)$.
\end{prop}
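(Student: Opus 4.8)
The plan is to reconstruct the orbit of $(\rho,x)$ in two decoupled stages, using only a small portion of the degree-$3$ invariant ring: first recover the $G$-orbit of $\rho$ inside the regular representation from the invariants that involve $\rho$ alone, and then, after fixing a representative of that orbit, recover $x$ by a single linear inversion using one $V$-valued invariant of bidegree $(2,1)$. Since $G$ is finite, every orbit in $R(G)\times V$ is finite, hence closed, so the full invariant ring separates orbits; the content of the proposition is that the degree-$3$ part already does so generically. I will use throughout that a generic $\rho\in R(G)$ has trivial stabilizer in $G$, since for each $g\neq e$ the locus $g\cdot\rho=\rho$ is a proper linear subspace.

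\emph{Stage 1 (the orbit of $\rho$).} The degree-$3$ invariants on $R(G)\times V$ that are independent of the $V$-coordinates are precisely the homogeneous degree-$3$ invariants of $G$ acting on the regular representation $R(G)$, and by the Reynolds operator these are spanned by the entries of the third moment tensor $|G|^{-1}\sum_{g\in G}(g\cdot\rho)^{\otimes 3}$ of the regular representation. Hence, by \cite[Theorem D.2]{bandeira2017estimation} applied with $R(G)$ itself playing the role of the signal space, these invariants already determine the $G$-orbit of a generic $\rho\in R(G)$. Fix any representative $\rho_0\in G\cdot\rho$; because $\mathrm{Stab}_G(\rho)=\{e\}$ generically, there is a unique $\xi\in V$ with $(\rho_0,\xi)$ in the orbit of $(\rho,x)$, so it now suffices to recover this $\xi$.

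\emph{Stage 2 (recovering $x$).} Consider the $V$-valued quantity $w(\rho,x)=\sum_{a\in G}\rho(a)^2\,(a^{-1}\cdot x)$. A one-line substitution ($b=g^{-1}a$) shows $w(g\cdot\rho,g\cdot x)=w(\rho,x)$, so the coordinates of $w$ are genuine $G$-invariants; they are homogeneous of degree $3$ (quadratic in $\rho$, linear in $x$), hence lie in the collection under consideration, so $w(\rho,x)$ is known from the data. Let $\pi\colon R(G)\to\mathrm{End}(V)$ denote the linear extension of the $G$-action, $\pi(\eta)=\sum_{a\in G}\eta(a)\,\pi(a)$. Evaluating at the orbit representative gives $w(\rho,x)=w(\rho_0,\xi)=\pi(\theta_0)\,\xi$, where $\theta_0\in R(G)$ has entries $\theta_0(b)=\rho_0(b^{-1})^2$. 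Since $\mathrm{id}_V=\pi(\delta_e)$ lies in the image of $\pi$, the non-invertible elements of that image form a proper subvariety; and since $\rho_0\mapsto\theta_0$ is a dominant map (a coordinate permutation followed by entrywise squaring), $\pi(\theta_0)$ is invertible for generic $\rho_0$. Then $\xi=\pi(\theta_0)^{-1}\,w(\rho,x)$, which produces the point $(\rho_0,\xi)$ of the orbit of $(\rho,x)$; applying $G$ recovers the whole orbit.

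Finally, the three genericity conditions invoked — $\rho$ in the generic locus of \cite[Theorem D.2]{bandeira2017estimation}, $\mathrm{Stab}_G(\rho)=\{e\}$, and $\pi(\theta_0)$ invertible — are each the nonvanishing of finitely many polynomials in the entries of $\rho$ and each holds for some real $\rho$ (for the last, $\pi(\theta_0)=\mathrm{id}_V$ when $\rho_0=\delta_e$), so their common locus, crossed with $V$, is a nonempty Zariski-open subset of $R(G)\times V$ on which the whole argument runs. I expect the only real care to be needed here is in this bookkeeping and in the precise use of \cite[Theorem D.2]{bandeira2017estimation}: one must check that its ``generic recovery from degree-$3$ invariants'' means exactly ``the generic orbit is uniquely determined'', and that the degree-$3$ invariants it uses for the regular representation are genuinely visible among the degree-$3$ invariants of $R(G)\times V$ — which they are, being independent of the $V$-coordinates.
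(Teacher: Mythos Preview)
Your proof is correct and takes a genuinely different route from the paper's. The paper argues in a single stroke: since the projection $R(G)\times V\to R(G)$ is $G$-equivariant, the orbit $G(\rho,x)$ projects onto $G\rho$, and by \cite[Theorem~D.2]{bandeira2017estimation} the latter consists of linearly independent vectors for generic $\rho$; hence $G(\rho,x)$ itself consists of linearly independent vectors in $R(G)\times V$, and the Jennrich-type argument from \cite{bandeira2017estimation} then applies directly on the product space to recover the orbit from the third-moment tensor. You instead decouple the problem: first recover $G\rho$ from the bidegree-$(3,0)$ invariants (citing the same theorem, but applied on $R(G)$ alone), then solve for $x$ by inverting an explicit linear map built from a single bidegree-$(2,1)$ invariant. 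The paper's approach is shorter---essentially one sentence of linear algebra plus a citation---and avoids the auxiliary invertibility check on $\pi(\theta_0)$. Your approach, on the other hand, is constructive: it isolates which invariants are actually used (only bidegrees $(3,0)$ and $(2,1)$, not all of degree~$3$) and yields an explicit formula for a point in the orbit once $G\rho$ is known, which is arguably a sharper conclusion than the paper's proof delivers.
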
 
\begin{proof}
  As in \cite{bandeira2017estimation} it suffices to show that the orbit of a generic $(\rho, x) \in R(G) \times V$ consists of linearly independent vectors.
  Note that projection map $R(G) \times V \to R(G)$ is $G$-invariant. Thus
  the projection of the orbit $G(\rho, x)$ to $R(G)$ is the $G$-orbit of
  $\rho$ in $R(G)$. It then follows from \cite[Theorem D.2]{bandeira2017estimation} that $G \rho$ consists of linearly independent vectors
  and hence so does $G(\rho, x)$. We can then recover the orbit
  from degree three invariants. 
\end{proof}
\begin{remark} Note that there is no way to estimate all of the degree invariants
  in $R(G) \times V$ from a given set of MRA measurements. For this reason, Proposition~\ref{prop.generic_distribution} is only of theoretical interests. In particular, note that even from a theoretical point of view our results for the dihedral
  group acting on $V = \R^L$
  are much stronger that the guarantee given by Proposition~\ref{prop.generic_distribution} since they state that quite a small subset of the degree three invariants of $R(G) \times V$ are sufficient to recover generic orbits.
\end{remark}

\paragraph{List recovery.}
Following the terminology of \cite[Section 1.4]{bandeira2017estimation},
we say that a signal $x$ admits {\em list recovery} from a set of moment
measurements if there are a finite number of orbits with same moments.
As was done in \cite[Section 4.2.2]{bandeira2017estimation}, one can
use the Jacobian criterion to determine if a collection of MRA moments
with generic distribution allows list recovery for a generic orbit $x$.

Precisely, let  $f_1, \ldots , f_r \in \R[p_1, \ldots , p_g,x_1, \ldots , x_L]^G$ be a collection of invariant
polynomials of degrees $(1,d_1), \ldots , (1,d_r)$ corresponding to some set
of entries of the moment tensors\\ $M^{d_1}(\rho, x), \ldots , M^{d_r}(\rho, x)$.
Then these moments are sufficient to allow list recovery of a generic signal
if and only if the rank of the Jacobian matrix $J(f_1, \ldots , f_r)$ equals $|G| -1 + L$.
The rank of the Jacobian can be effectively computed in examples, but this will be considered in another work.

An easy consequence of the Jacobian criterion is the following corollary.
\begin{cor}
	If $|G| > {L+1\choose{2}} +L $,  then list recovery is impossible from second order moments.
\end{cor}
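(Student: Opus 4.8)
The plan is to combine the Jacobian criterion recalled above with an elementary dimension count. Write $g=|G|$ and let $f_1,\ldots,f_r\in\R[p_1,\ldots,p_g,x_1,\ldots,x_L]^G$ be the invariant polynomials obtained from the entries of the first and second moment tensors $M^1(\rho,x)$ and $M^2(\rho,x)$, with the known bias $\sigma^2 I$ subtracted from $M^2$. First I would count these polynomials: $M^1$ lies in $\R^L$ and so contributes $L$ entries, while $M^2$ is a symmetric $L\times L$ matrix and so contributes $\binom{L+1}{2}$ entries; hence $r\le \binom{L+1}{2}+L$.

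Next, I would observe that the Jacobian matrix $J(f_1,\ldots,f_r)$ has exactly $r$ rows, so $\operatorname{rank}J(f_1,\ldots,f_r)\le r\le\binom{L+1}{2}+L$. Under the hypothesis $|G|>\binom{L+1}{2}+L$ and using $L\ge 1$ one gets
\[
\operatorname{rank}J(f_1,\ldots,f_r)\;\le\;\binom{L+1}{2}+L\;<\;|G|\;\le\;|G|-1+L,
\]
so the rank is strictly below the threshold $|G|-1+L$ of the Jacobian criterion, whence list recovery of a generic signal from the second-order moments is impossible. The same conclusion can be reached directly: the moment map $(\rho,x)\mapsto\big(M^1(\rho,x),M^2(\rho,x)\big)$ sends the parameter space $\Delta_{|G|}\times\R^L$, of dimension $|G|-1+L$, into $\R^L\times\mathrm{Sym}^2(\R^L)$, of dimension $\binom{L+1}{2}+L$; when $|G|>\binom{L+1}{2}+L$ the source is the larger, so the generic fiber has positive dimension, and since fibers are $G$-invariant and $G$-orbits are finite, the generic fiber meets infinitely many orbits.

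I do not expect a genuine obstacle here, as the proof is bookkeeping; the only points needing care are (i) accounting for the constraint $\sum_{g\in G}\rho(g)=1$, which is why the parameter space has dimension $|G|-1+L$ and not $|G|+L$ — this is already built into the threshold in the criterion, so invoking the criterion is the cleanest route — and (ii) in the geometric variant, the step from a positive-dimensional (complex) fiber to infinitely many real orbits, which is avoided by quoting the criterion as stated or by noting that the real image already has dimension at most $\binom{L+1}{2}+L$. I would also remark that the same count in fact shows list recovery fails already when $|G|>\binom{L+1}{2}+1$; the quantity $\binom{L+1}{2}+L$ in the statement is exactly the total number of scalar measurements supplied by the first and second moments.
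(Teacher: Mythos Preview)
Your argument is correct and is essentially the same as the paper's: count the entries of $M^1$ and the symmetric matrix $M^2$ to get at most $L+\binom{L+1}{2}$ invariants, observe that this is strictly less than the Jacobian threshold $|G|-1+L$ under the hypothesis, and conclude. The paper's proof is just the one-line version of your first paragraph; your added geometric fiber-dimension remark and the observation that the bound sharpens to $|G|>\binom{L+1}{2}+1$ are correct bonuses not present in the paper.
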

\begin{proof}
  Since the second order moment tensor is symmetric, the total number
  of first and second order moments is 
$L + {L+1\choose{2}}$ which is smaller than $G-1+L$ so list recovery is impossible.
\end{proof}

\paragraph{Orbit recovery.}
Using methods from algebraic geometry we can also give a criterion
for when a collection of moment polynomials allows for generic orbit recovery.
However, this criterion involves computing the dimension and degree of an algebraic variety. Such calculations can be done symbolically using a computer algebra system but not efficiently \cite[Appendix D]{bendory2020toward}.

To simplify the discussion we focus on the first and second order moments and
recall the strategy used in the proof of Theorem~\ref{thm:main}. Given a generic probability distribution
$\rho = \{p_g\}_{g \in G=D_{2L}}$ and a generic vector $x \in V= \R^L$, we proved that the following system
of bilinear equations in the $3L-1$ unknowns $x', p'_g$ has at most $2L= |G|$ solutions
\begin{equation} \label{eq.linearsystem}
\begin{array}{ccc}\sum_g p_g gx - p'_g gx' & =&  0,\\
  \sum_g (p_g (gx)^T gx - p'_g (gx')^T gx')  & = & 0.
  \end{array}
\end{equation}
(Note that the number of unknowns is $3L-1$ because $\sum_{g\in G} p'_g =1$ since $\rho'$ is a probability distribution and we can therefore express one of
the $p_g$ in terms of the other ones.)
The next proposition shows that our verification was equivalent to proving a statement
about an {\em incidence variety} associated to the group $G$ and vector space
$V = \R^L$. To formulate the result, we first establish notation for the action
of a finite group $G$ on a vector space $V$.
Let $I \subset (R(G) \times V)^2$ be the subvariety defined by the bilinear
equations~\eqref{eq.linearsystem}, where the $x,x', \{p_g\}, \{p'_g\}$ are all
considered variables. Since $\rho, \rho'$ are probability distributions
$\sum_{g\in G}p_g = \sum_{g \in G} p'_g$ so we can view this as a system of equations in $2(\dim V + |G| -1)$ variables.

In the language of algebraic geometry,  $I$ is called an {\em incidence variety}. 
The geometry of the incidence variety $I$ characterizes
when orbit and list recovery are possible.
\begin{prop} Let $G$ be a finite group acting on a vector space $V$, and let $I \subset (R(G) \times V)^2$ be an incidence defined in~\eqref{eq.linearsystem}.
  \begin{enumerate} 
  \item  If $\dim I = \dim V + |G|-1$, then for a generic signal $x$ and probability
    distribution $\rho = \{p_g\}_{g \in G}$, list recovery is possible from
    the first and second order moments $M^2(x, \rho)$.
  \item If $\dim I = \dim V + |G|$ and in addition $\deg I = |G|$, then for a 
    generic signal $x$ and probability distribution $\rho=\{p_g\}_{g\in G}$, orbit recovery is possible from the first and second orders moment
    $M^2(x,\rho)$.
	\end{enumerate}
\end{prop}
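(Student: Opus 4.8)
The plan is to analyze the projection $\pi_1\colon I\to R(G)\times V$ onto the first factor (and, symmetrically, $\pi_2$), whose fiber over $(\rho,x)$ is exactly the set of pairs $(\rho',x')$ that produce the same first and second moments. Two structural features of $I$ drive everything. First, the diagonal $\Delta=\{((\rho,x),(\rho,x))\}$ is contained in $I$, so $\pi_1$ is dominant; after imposing the normalization $\sum_g p_g=1$, the diagonal has dimension $\dim V+|G|-1$. Second --- and this is the point that, in general, replaces the explicit elimination carried out for $D_{2L}$ --- the moments $M^1,M^2$ are invariant under the action of $G$ on $R(G)\times V$ that acts on $V$ through the given representation and on $R(G)$ by right translation $(h\cdot\rho)(g)=\rho(gh)$; this action preserves the simplex, and for each $h\in G$ the graph $\Gamma_h=\{((\rho,x),(h\cdot\rho,h\cdot x))\}$ is an affine-linear subvariety of $I$, hence irreducible of dimension $\dim V+|G|-1$ and of degree one, with $\Gamma_e=\Delta$. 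Since right translation by $h\ne e$ is a fixed-point-free permutation of the coordinates of $R(G)$, the action is generically free, so for generic $(\rho,x)$ the graphs $\Gamma_h$ are pairwise distinct; consequently $\mathcal O:=\bigcup_{h\in G}\Gamma_h\subseteq I$ is reduced, of pure dimension $\dim V+|G|-1$, with exactly $|G|$ components of degree one, and over a generic $(\rho,x)$ it contributes precisely the orbit $\{(h\cdot\rho,h\cdot x):h\in G\}$, a set of $|G|$ distinct points, to the fiber of $\pi_1$.

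For part (1), assume $\dim I=\dim V+|G|-1$ and write $I=\bigcup_j I_j$ as a union of irreducible components. For each $j$, either $\pi_1|_{I_j}$ is not dominant, in which case $\pi_1(I_j)$ is a proper closed subvariety avoided by a generic $(\rho,x)$, or $\pi_1|_{I_j}$ is dominant, in which case $\dim I_j\le\dim V+|G|-1$ forces, by the fiber-dimension theorem, a finite generic fiber. Taking the finite union over $j$, the fiber $\pi_1^{-1}(\rho,x)$ is finite for generic $(\rho,x)$; since every $(\rho',x')$ with the prescribed moments lies in it, only finitely many orbits share these moments, which is list recovery.

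For part (2), assume $\dim I=\dim V+|G|$ and $\deg I=|G|$. Now $I$ has dimension one more than the diagonal, the bare dimension count no longer yields finite fibers of $\pi_1$, and the degree hypothesis must be used to supply the bound. The strategy is a conservation-of-number argument: pass to the projective closure $\overline I$ in a suitable $\mathbb P^N$ and intersect with a generic linear subspace of complementary dimension --- equivalently, fix a generic $(\rho,x)$ and cut the remaining moment constraints with generic hyperplanes --- reducing to a zero-dimensional scheme whose length is bounded by $\deg I=|G|$. This simultaneously rules out any component of $I$ of dimension $\dim V+|G|$ dominating $R(G)\times V$ (such a component would force positive-dimensional, hence infinite, generic fibers), so those excess components lie over a proper closed subset and are invisible over a generic base point, and it bounds the cardinality of the generic fiber of the genuinely finite part of $\pi_1$ by $|G|$. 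Since $\mathcal O\subseteq I$ already supplies the $|G|$ distinct orbit points, the bound is attained and $\pi_1^{-1}(\rho,x)$ is exactly this orbit; as the moments determine $\rho'$ linearly from $x'$, this says the only signals with the same $M^1,M^2$ as $x$ form the orbit $Gx$, i.e., orbit recovery. For $G=D_{2L}$, $V=\R^L$ this is the content of Theorem~\ref{thm:main}, where the Bézout bound $2L=|G|$ coming from the two quadratics \eqref{eq.quadratic10} and \eqref{eq.quadraticnplusone} plays the role of $\deg I=|G|$ and the constraint $\lambda^L=1$ records the finiteness of the orbit.

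The main obstacle is the degree bookkeeping in part (2): because $I$ is in general neither irreducible nor of pure dimension, one must carefully separate the generically-finite locus of $\pi_1$ from the higher-dimensional excess components and check that the intersection number against $\deg I$ truly controls the generic fiber of the finite part --- this needs the conservation-of-number and upper-semicontinuity machinery (restricting to a Zariski-open subset of the base over which $\pi_1$ is flat, with multiplicities tracked), rather than a bare fiber-dimension argument, and it is where the symbolic computation of $\dim I$ and $\deg I$ enters. A secondary, routine point is to record the genericity conditions --- trivial stabilizer of $(\rho,x)$ under right translation, and non-vanishing of the resultants that make the elimination steps such as \eqref{eq.znplusonerational} well defined --- on a Zariski-dense open set, exactly as at the end of the proof of Theorem~\ref{thm:main}.
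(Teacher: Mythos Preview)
Your treatment of part~(1) matches the paper's: project $I$ to the first factor, use the fiber-dimension theorem component by component, conclude the generic fiber is finite. Your identification of the graphs $\Gamma_h$ (the paper calls them $X_g$) as degree-one linear subvarieties of $I$ of dimension $\dim V+|G|-1$ is also exactly what the paper uses.

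For part~(2), however, you have been misled by a typo in the statement. The paper's own proof works throughout under the hypothesis $\dim I=\dim V+|G|-1$ (the same as in part~(1)), not $\dim V+|G|$; with that correction the argument is much shorter than what you attempt. Under the corrected hypothesis each $\Gamma_h$ is a \emph{top-dimensional} irreducible component of $I$; since each is linear of degree one, together they contribute $|G|$ to $\deg I$. The additional hypothesis $\deg I=|G|$ then forces $I$ to have no other top-dimensional components, so $I=\bigcup_{h\in G}\Gamma_h$ up to lower-dimensional pieces. Lower-dimensional components do not dominate under $\pi_1$, hence over a generic $(\rho,x)$ the fiber is exactly the orbit $\{(h\cdot\rho,h\cdot x):h\in G\}$, which is orbit recovery.

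Your conservation-of-number argument, taken under the literal hypothesis $\dim I=\dim V+|G|$, does not go through. If $I$ really had a component of dimension $\dim V+|G|$, the graphs $\Gamma_h$ would \emph{not} be top-dimensional and would contribute nothing to $\deg I$, so the equality $\deg I=|G|$ says nothing about them and cannot be ``saturated'' by the orbit points. Moreover, intersecting the projective closure $\overline I$ with a generic complementary linear space counts points in a generic linear slice, not in the fiber of $\pi_1$ over a fixed $(\rho,x)$; these are different linear conditions, and the degree bound does not transfer to bound fiber cardinality. In particular, nothing in your argument actually excludes a top-dimensional component dominating the base with one-dimensional generic fibers. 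The fix is simply to read the hypothesis as $\dim I=\dim V+|G|-1$ and run the component-count above.
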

\begin{proof}
  Consider the projection $\pi \colon I \to R(G) \times V$ defined by
  $(x,\rho, x', \rho') \mapsto (x, \rho)$. If $\dim I = \dim V + R(G)-1$,
  then the generic fiber of $\pi$
  must be 0-dimensional. Hence, for a generic vector $x \in V$ and probability
  distribution $\rho \in R(G)$, there 
  can be at most a finite number of pairs $(x,\rho, x', \rho') \in I$.
  In other words,
  there are finite number of vectors $x'$
  such that there exist a
        distribution
	$\rho'$ with the property that $M^1(x,\rho) = M^1(x',\rho')$
        and $M^2(x,\rho) = M^2(x', \rho')$
        This proves part (i).

	Note that for each $g \in G$,  the set $X_g = \{(x,\rho ,gx,g\rho)|
        x \in V, \rho \in R(G)\}$ is a $\dim V + |G|-1$-dimensional subvariety of $I$, which is isomorphic to $R(G) \times V$. In particular,
        if $\dim I = \dim V + |G|$ then it must necessarily be an irreducible
        component of the variety $V$ in the sense of algebraic geometry.
        Hence, if $\dim I = \dim V + |G|-1$ then $I$
        has at least $|G|$ irreducible components.
and therefore its degree must be at least
	$|G|$. 
	Hence, if $\dim I = \dim |V| + |G|-1$ and $\deg I = |G|$, then $I$ has exactly $|G|$
	irreducible components and for generic $x, \rho$ there will be exactly
	$|G|$ pairs $(x,\rho, x', \rho') \in I$. Hence each $x'$ must necessarily equal 
	$gx$ for some $g \in G$.
        Therefore, the first and second order moments recover generic orbits $x$ in this case. 
\end{proof}
}

\section{Algorithms} \label{sec:algorithms}

In this section, we introduce  three algorithmic paradigms to estimate the signal $x$ from dihedral MRA observations $y_1,\ldots,y_n$ as in~\eqref{eq:mra_observations}.
We first introduce the three methods, and then compare them numerically in Section~\ref{sec:numerical_experiments}. 

\subsection{Group synchronization}
If the group elements $g_1,\ldots,g_n\in D_{2L}$ were known,  estimating the signal can be done by aligning the observations and averaging out the noise:
\begin{equation} \label{eqn:avg_sync}
{x_{\text{est}}} = \frac{1}{n}\sum_{i=1}^ng_i^{-1}y_i. 
\end{equation}
This motivates {synchronization} methods to estimate the unknown group elements from the observations. 
Synchronization starts by aligning all pairs of observations $y_i, y_j$, $i\neq j$, so that 
\begin{equation} \label{eqn:alignment}
y_i \approx g_{ij} y_j,
\end{equation}
for some group element $g_{ij}\in D_{2L}$. 
A standard alignment procedure is based on cross-correlating the observations. In more general groups, other common features can be harnessed; see for example~\cite{haming2010structure,singer2011three}. The relation~\eqref{eqn:alignment} is merely a proxy to $g_i \cdot x\approx g_{ij} g_j \cdot  x$, which in turn means that $g_i g_j^{-1} \approx g_{ij}$. At this stage, one reduces the MRA problem to the problem of \textit{group synchronization}~\cite{singer2011angular}, where we aim at estimating the unknown group elements $g_1,\ldots,g_n$ from a subset of their ratios $g_i g_j^{-1} $, often corrupted with noise. 

Early synchronization studies addressed the problem over compact groups, such as, finite groups, phases, and rotations. The common property of all synchronization cases over compact groups is that we can reduce them all to synchronization over rotations, or a subgroup of rotations, by using a faithful orthogonal representation~\cite{boumal2016nonconvex,  carlone2015initialization, perry2018message, tron2016survey}. 
Further generalizations extended   synchronization methods to  non-compact groups, and in particular to the Euclidean group, see e.g.,~\cite{ozyesil2018synchronization, rosen2019se,bendory2021compactification}.

Specifically for the dihedral MRA problem~\eqref{eq:mra}, we start by computing the cross-correlation between any observation $y_i$ and any other observation $y_j$ and its reflection $s y_j$. The maximal value indicates the best alignment as in~\eqref{eqn:alignment}. The resulting ratios $g_{ij} \approx g_i g_j^{-1} $ serve as an input for a standard spectral algorithm~\cite{singer2011angular}, which uses a rounding procedure onto the dihedral group, resulting in  estimates of the group elements {$\tilde g_1, \ldots, \tilde g_n$}. 
The orbit of the signal is then estimated by averaging  over the synchronized observations 
\begin{equation}
	{x_{\text{est}}} = \frac{1}{n}\sum_{i=1}^n{\tilde{g}}_i^{-1}y_i. 
\end{equation}

Unfortunately, in  low SNR environments the error of estimating the  ratios $g_i g_j^{-1}$, and thus of estimating $g_1,\ldots,g_n,$  grows rapidly~\cite{aguerrebere2016fundamental, robinson2004fundamental, robinson2009optimal, bendory2019multi}. Thus, in such regimes we consider techniques which aim to recover the signal $x$ directly, bypassing the estimation of the missing group elements~$\{g_i\}_{i=1}^n$. Next, we present two such methods, based on expectation-maximization and the method of moments. 

\subsection{Maximum likelihood estimation using expectation-maximization}
The log-likelihood function of~\eqref{eq:mra} is given by 
\begin{equation} \label{eq:likelihood}
	\ell(x,\rho)=\log p(y_1,\ldots,y_n; x,\rho) = \sum_{i=1}^N\log \sum_{j=1}^{2L}\rho[j]\frac{1}{(2\pi\sigma^2)^{L/2}}e^{-\frac{||y_i-g[j]\cdot x||^2}{2\sigma^2}},
\end{equation} 
where $g[1],\ldots,g[2L]$ are the elements of $D_{2L}$.
This is the standard likelihood function of a Gaussian mixture model, but all centers are connected through the orbit of $ D_{2L}$ acting on $x$. 
We wish to find the signal $x$ and distribution~$\rho$ that maximize~\eqref{eq:likelihood}. 
In {the} sequel, we assume no prior information on the signal and the distribution. If such information is available, then it is useful to consider the log-posterior distribution  $\log p(x,\rho|y_1,\ldots,y_n)$, which is equal to the log-likelihood plus the log of the prior terms.

To maximize the likelihood function, we devise an expectation-maximization (EM) algorithm~\cite{dempster1977maximum}. 
The EM algorithm has been successfully applied to other MRA setups~\cite{bendory2017bispectrum,abbe2018multireference,ma2019heterogeneous,janco2021accelerated} as well as for cryo-EM~\cite{scheres2012relion,sigworth1998maximum,bendory2020single}.
Although  EM is not guaranteed to achieve the maximum of the non-convex likelihood function~\eqref{eq:likelihood}, it is guaranteed that each EM iteration does not reduce the likelihood. 
{ In addition, for the general discrete MRA model, it was shown  that at low noise, this landscape is ``benign'', namely,  there are no spurious local optima (besides the maximum likelihood) and only strict saddle points. At high noise, this landscape
may develop spurious local optima, depending on the specific group. 
In addition, it was shown that the likelihood landscape is locally convex~\cite{fan2020likelihood}. 
 }

EM is an iterative algorithm, and each step consists of two steps. In the first step, called the E-step, the expectation of the complete likelihood (namely, the joint likelihood of $x,\rho$ and the group elements)  is computed. The expectation is taken with respect to the group elements (i.e., the nuisance variables),  given the current estimates of the signal $x_t$ and the distribution $\rho_t$:
	\begin{equation}
		\begin{split}
			Q(x,\rho |x_t,\rho_t) &= \E \left\{\log p(y_1,\ldots,y_n,g_1,\ldots,g_n; x, \rho) \right\} \\
			& = \sum_{i=1}^{n}\E\left\{-\frac{1}{2\sigma^2}\|y_i - g_i\cdot x\|^2+\log\rho[g_i]\right\} + \text{constant}\\
			& = \sum_{i=1}^{N}\sum_{j=1}^{2L}w_{i,j}\left\{-\frac{1}{2\sigma^2}\|y_i - g[j]\cdot x\|^2+\log\rho[j]\right\} + \text{constant},
		\end{split}
	\end{equation} 
where 
\begin{equation} \label{eq:em_weights}
	w_{i,j} =  \frac{\rho_t[j]e^{\frac{-1}{2\sigma^2}\|y_i-g[j] \cdot x_t\|^2 }}{\sum_{j=1}^{2L}\rho_t[j]e^{\frac{-1}{2\sigma^2}\|y_i-g[j]\cdot x_t\|^2 }}.
\end{equation}
The  second step, called M-step, maximizes $Q$ with respect to $x$ and $\rho$. In our case, the update step reads:
\begin{equation} \label{eq:Mstep}
	\begin{split}
	x_{t+1} = \frac{1}{n} \sum_{i=1}^N\sum_{j=1}^{2L}w_{i,j}g^{-1}[j]y_i	\\
	\rho_{t+1}[j] = \frac{\sum_{i=1}^n w_{i,j}}{\sum_{i=1}^n \sum_{j=1}^{2L}w_{i,j}}.
\end{split}
\end{equation}
If prior information is available (and thus the EM tries to maximize the posterior distribution rather than the likelihood), then it will act as a regularizer on the solution of the M-step. 
The EM algorithm iterates between computing the weights~\eqref{eq:em_weights} and updating the parameters~\eqref{eq:Mstep} until a stopping criterion is met.

\subsection{The method of moments}
The idea {behind} the method of moments is finding a pair $(x,\rho)$ whose moments match the empirical moments of the observations. 
In particular, according to Theorem~\ref{eq:mra}, only the first two moments are required to  uniquely characterize the orbit of  generic $x$ and $\rho$. 
The empirical moments can be computed from the data simply by averaging:
\begin{equation}
	\begin{split}
	{{M}^1_{\text{est}}} &= \frac{1}{n}\sum_{i=1}^n y_i, \\
	{{M}^2_{\text{est}}} &= \frac{1}{n}\sum_{i=1}^n y_iy_i^T.		
	\end{split}
\end{equation}
By the law of large numbers, and using Lemma~\ref{lem:moments}, for large $n$ we have 
\begin{equation}
	\begin{split}
	{{M}^1_{\text{est}}}\approx M^1(x,\rho) = C_{{ p}}x  + C_{{ q}} sx, \\
			{{M}^2_{\text{est}}} \approx M^2(x,\rho) = C_x D_{{ p}} C_x^T + C_{sx} D_{{ q}} C_{sx}^T,
	\end{split}
\end{equation}
where $C_z\in\R^{L\times L}$ is a circulant matrix generated by $z\in \R^L$, and  $D_z\in\R^{L\times L}$ is a diagonal matrix whose entries are $z$. 
As $n\to\infty$, $	{{M}^1_{\text{est}}}  {\to} M^1(x,\rho)$ and ${{M}^2_{\text{est}}} {\to} M^2(x,\rho)$ almost surely.

{A common practice is  to estimate  $(x,\rho)$ from ${{M}^1_{\text{est}}}$ and ${{M}^2_{\text{est}}}$} by minimizing a non-convex least squares objective:
\begin{equation} \label{eqn:LS}
\min _{\tilde{x} \in \mathbb{R}^{L}, [{p},{q}] \in \Delta^{2L}}
\norm{{{M}^2_{\text{est}}} - C_{\tilde{x}} D_{{ p}} C_{\tilde{x}}^T - C_{s\tilde{x}} D_{{ q}} C_{s\tilde{x}}^T}_{\mathrm{F}}^{2}
+\lambda
\norm{{{M}^1_{\text{est}}} - C_{{ p}}\tilde{x}  - C_{{ q}} s\tilde{x}}_{2}^{2}. 
\end{equation}
The solution of~\eqref{eqn:LS} is the method of moments estimator. While {the objective function~\eqref{eqn:LS}} is non-convex, it seems to provide accurate estimates in many cases. {In the low SNR regime, the method of moments is tightly connected to the maximum 
{likelihood} estimator. Specifically, in this regime likelihood optimization  reduces to a sequence of least squares optimization problems that match  moments~\cite{katsevich2020likelihood,fan2021maximum}. Since we use only two moments, the method of moments~\eqref{eqn:LS} can be interpreted as an approximation of the maximum likelihood estimator.}

\subsection{Numerical experiments}
\label{sec:numerical_experiments}
This section compares numerically the algorithmic methods discussed above: synchronization, expectation-maximization, and the method of moments. We define signal-to-noise ratio (SNR) as $\norm{x}^2/(L\sigma^2)$. To account for the group symmetry, we define  relative error as 
\begin{equation}
\text{relative error = } \min_{g\in D_{2L}}\frac{\|g\cdot {x_{\text{est}}}  - x\|}{\|x\|},
\end{equation}
where ${x_{\text{est}}} $ is the signal estimate.{ The entries of the  ground-truth $x$ of length $L=10$ were drawn i.i.d.\ from a normal distribution  with mean zero and variance one, and the distribution $\rho$ was uniformly sampled from the simplex $\Delta_{2L}$.  }

We consider two regimes: (i) a relatively small number of observations ($n=1000$) and moderate SNR levels,  and (ii) large $n$ and low SNR. The code to reproduce all experiments is publicly available at~\url{https://github.com/nirsharon/DihedralMRA}. The results below represent the average over $50$ trials. We initialized the EM algorithm from a single random point and halted it when the difference of the likelihood between two consecutive iterations dropped below $10^{-4}$, or after a maximum of $400$ iterations. For the method of moments, we minimized~\eqref{eqn:LS} using the trust-regions method; 
we initialized the optimization algorithm from $10$ different random initial guesses and chose the one that yields the least value of the cost~\eqref{eqn:LS}. The number of trust-regions iterations  was limited to $200$.

\paragraph{Moderate SNR regime.}
We begin with a noise regime where the synchronization approach presents a viable alternative to EM and the method of moments. Figure~\ref{fig:mod_noise} shows the  relative error of the three methods as a function of the SNR  with $n=1000$ observations. The method of moments shows inferior results compared to synchronization and EM since the empirical moments do not approximate the population moments accurately enough for such a small number of observations. For high SNR, the performance of  synchronization and EM are comparable.
{
The synchronization behavior is thus compatible with current knowledge about the synchronization problem and the spectral algorithm specifically, see, e.g.,~\cite{cucuringu2016sync, gao2019geometry}. 
}
 However, as the SNR drops, synchronization fails to estimate the group elements accurately, while both the method of moments and EM  present consistent error rates. 
{
This phenomenon agrees with theoretical findings regarding alignment in the presence of high noise~\cite{aguerrebere2016fundamental} and synchronization when applied to such corrupted input data~\cite{singer2011three}.
}
As the SNR approaches $1$, when the signal and the noise are of the same order, the synchronization method introduces relative error close to~$1$, meaning it contributes no  information about the solution.

\begin{figure}
	\begin{center}
		\includegraphics[width = .5\textwidth]{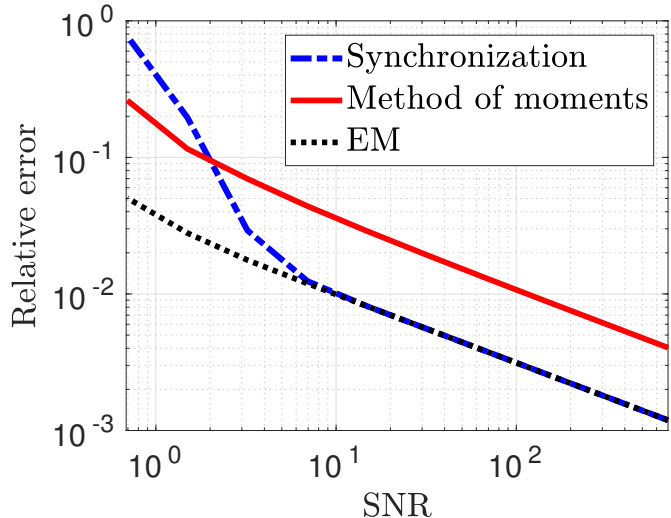}
		\caption{The relative error of the three methods under moderate SNR levels with $n=1000$ observations. As the SNR deteriorates, the synchronization method fails to estimate the group elements, and thus the signal, accurately.}
		\label{fig:mod_noise}
	\end{center}
\end{figure}

\paragraph{Low SNR.} 
We discard the synchronization algorithm in the low SNR regime as it cannot cope with  high noise levels, as demonstrated in Figure~\ref{fig:mod_noise}. In addition, since the first step of the synchronization method involves pairwise alignment, the synchronization input consists of $\mathcal{O}(n^2)$ group elements, and so the computational complexity of this method makes it impractical for as many as $n=10^5$ observations.

Figure~\ref{subfig:error} shows  relative errors as a function of SNR. The EM outperforms the method of moments for SNR values  above 1/10. For lower SNR levels,  the method of moments shows similar estimation rates.
{In the high SNR regime, the error curves of both methods scale as $\text{SNR}^{-1/2}$, namely as $\sigma$, which is the same estimation rate as if the group elements were known. {In particular, the numerical slope of the EM method is $-0.4999$ and the method of moments presents a numerical slope of $-0.5104$}. In the low SNR regime, however, the error curves scale as $\text{SNR}^{-1}\propto \sigma^2$. 
While this slope is expected for the method of moments that directly uses the first two moments (and thus its standard deviation is proportional to $\sigma^2$), the moments do not appear explicitly in the EM iterations. {Specifically, the numerical slopes for SNR values below $1/10$ were $-1.0561$ and $-1.1058$ for the method of moments and EM, respectively.} This rate implies that accurate estimation requires $n\gg\sigma^4$, corroborating our theoretical findings (Corollary~\ref{cor:sample_complexity}) that no algorithm can achieve better estimation rates in the low SNR regime. A similar phenomenon was observed by previous MRA studies~\cite{sigworth1998maximum, bendory2017bispectrum, abbe2018multireference, bendory2020super}. For the connection between EM and the method of moments in the low SNR regime, see~\cite{katsevich2020likelihood,fan2020likelihood,fan2021maximum}}. 	

Figure~\ref{subfig:time} presents the corresponding average runtime. The runtime  of EM increases as the SNR decreases, while the runtime of the  method of moments remains roughly constant. The reason for the growth in runtime is revealed in Figure~\ref{fig:EM_iters}, where we display the average number of EM iterations  as a function of  SNR. The figure  shows that the number of iterations is inversely proportional to the SNR. 
 
\begin{figure}
	\begin{center}
		\begin{subfigure}[ht]{0.45\textwidth}
			\centering
			\includegraphics[width=\textwidth]{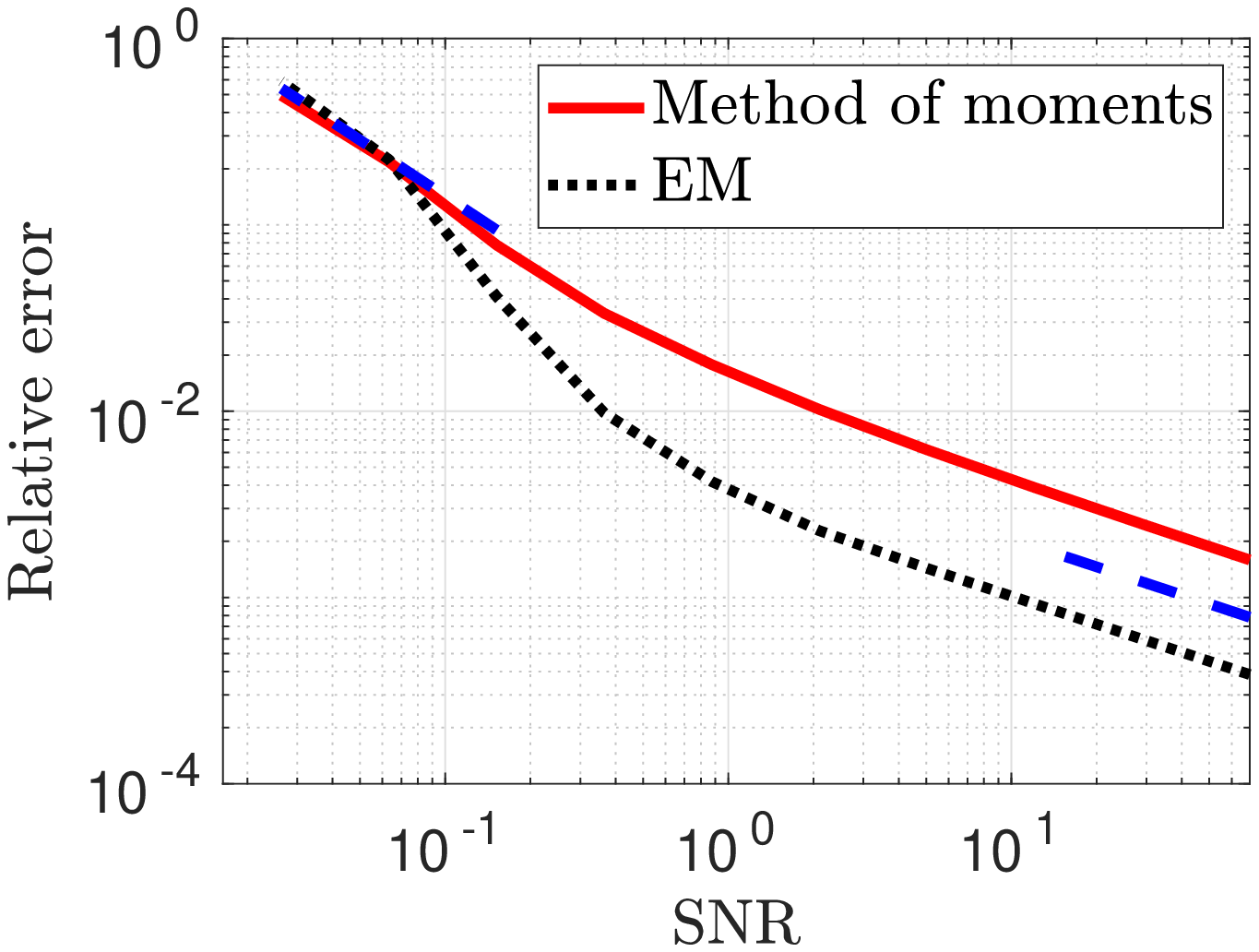}
			\caption{}
			\label{subfig:error}
		\end{subfigure}
		\quad
		\begin{subfigure}[ht]{0.45\textwidth}
			\centering
			\includegraphics[width=\textwidth]{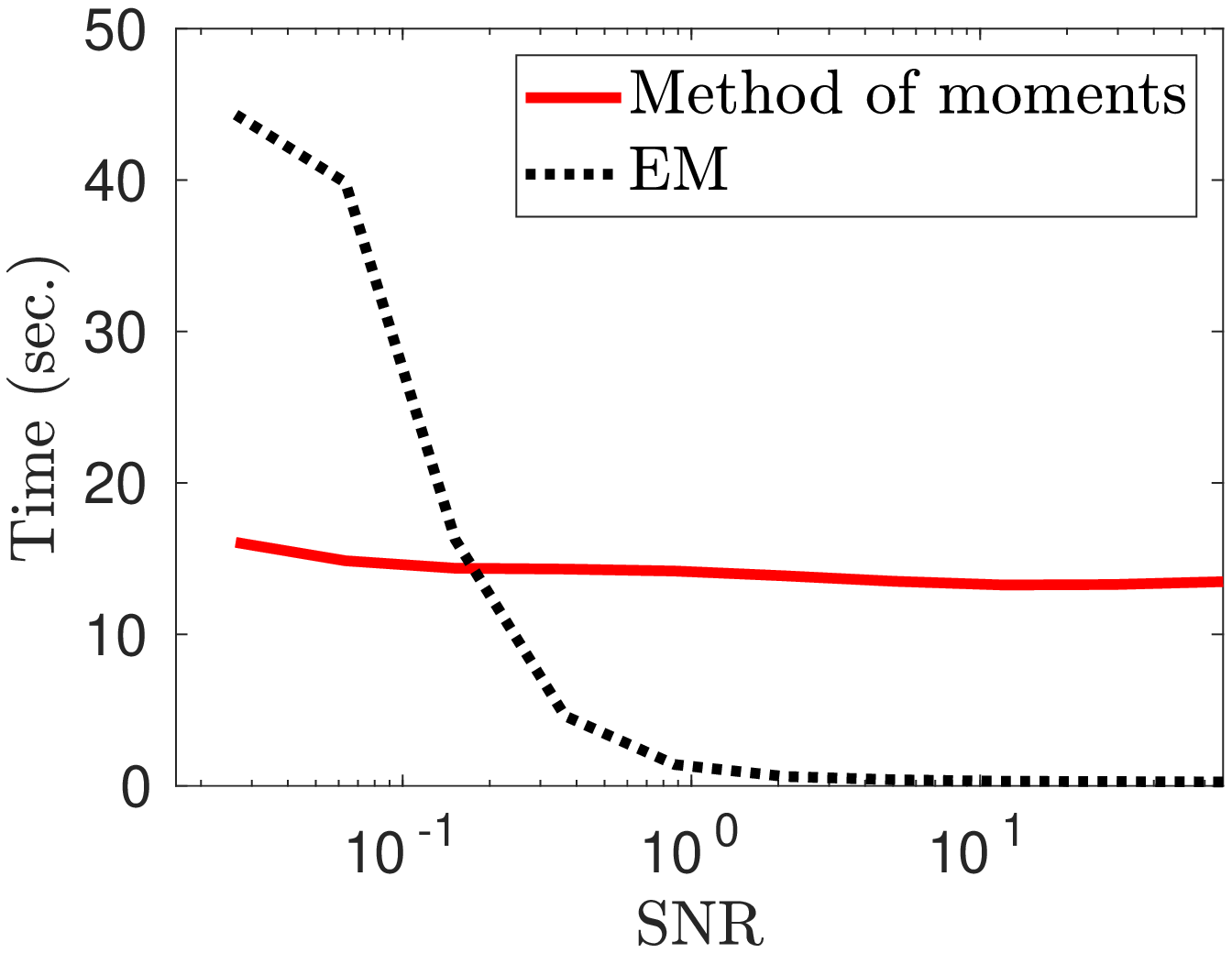}
			\caption{}
			\label{subfig:time}
		\end{subfigure}
		\caption{Relative error {(left panel)} and runtime {(right panel)} of the method of moments and EM as a function of the SNR for  $n=10^5$ observations. 
		{In the high SNR regime, the slope of the error curves (dashed blue line), for both methods,  scales as $\text{SNR}^{-1/2}\propto \sigma$, which is the same estimation rate as if the group elements were known. In the low SNR regime, however, the error curves scale as $\text{SNR}^{-1}\propto \sigma^2$, corroborating our  theoretical findings.} 	
		\label{fig:largeN_comparison}}
	\end{center}
\end{figure}

\begin{figure}
	\begin{center}
		\includegraphics[width = .45\textwidth]{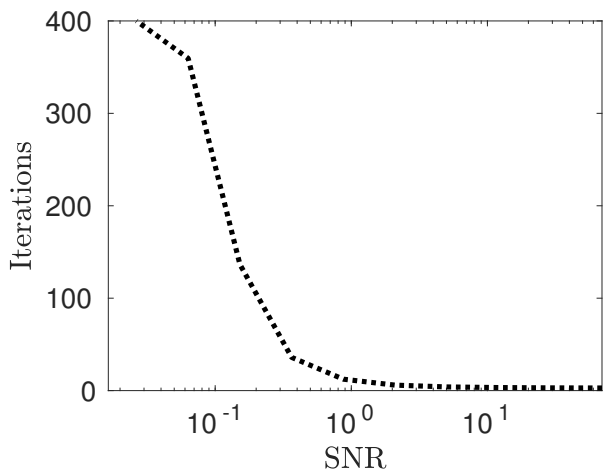}
		\caption{The average number of EM iterations as a function of the SNR. The maximum number of iterations is set to $400$.}
		\label{fig:EM_iters}
	\end{center}
\end{figure}

\section*{Acknowledgment}
W.L. and N.S. are partially supported by BSF grant no. 2018230. 
T.B. and D.E. are partially supported by BSF grant no. 2020159.
T.B. and N.S are partially supported by the NSF-BSF award 2019752. 
T.B. is also supported in part by the
 ISF grant no. 1924/21.
 D.E. is supported by Simons Collaboration grant 708560. W.L. is partially supported by NSF award IIS-1837992.

\bibliographystyle{plain}

\end{document}